\let\orignot\not 
\let\not\orignot
\newcommand{\mypar}[1]{\paragraph*{#1}}
\DeclareMathOperator{\pcsp}{PCSP}
\DeclareMathOperator{\csp}{CSP}
\DeclareMathOperator{\pol}{Pol}
\DeclareMathOperator{\Ab}{A}
\DeclareMathOperator{\im}{Im}
\DeclareMathOperator{\id}{id}
\DeclareMathOperator{\Lin}{Eqn}
\DeclareMathOperator{\PLin}{PEqn}
\DeclareMathOperator{\Ls}{L}
\DeclareMathOperator{\Rs}{R}
\DeclareMathOperator{\Cs}{C}
\DeclareMathOperator{\LCs}{LC}
\DeclareMathOperator{\LRs}{LR}
\DeclareMathOperator{\CRs}{CR}
\newcommand{\dom}{\mathrm{dom}}
\newcommand{\substr}{\preceq}
\newcommand{\leqsg}{\sqsubseteq}
\newcommand{\geqsg}{\sqsupseteq}
\newcommand{\lneqsg}{\sqsubsetneq}
\renewcommand{\A}{{\bm{A}}}
\newcommand{\B}{{\bm{B}}}
\renewcommand{\D}{{\bm{D}}}
\renewcommand{\C}{{\bm{C}}}
\newcommand{\I}{{\bm{I}}}
\newcommand{\X}{\bm{X}}
\newcommand{\Mcal}{\mathcal{M}}
\newcommand{\Mscr}{\mathscr{M}}
\newcommand{\NN}{\mathbb{N}}
\newcommand{\arty}{\mathrm{ar}}
\newcommand\Crestrict[2]{{
  \left.\kern-\nulldelimiterspace 
  #1 
  \right|_{#2} 
  }}
\newcommand{\AIP}{\textsc{AIP}}
\newcommand{\BLP}{\textsc{BLP}}
\newcommand{\tuple}[1]{\ensuremath{\mathbf{#1}}}
\newtheorem{lemma}{Lemma}
\newtheorem{corollary}{Corollary}
\newtheorem{observation}{Observation}
\newtheorem{theorem}{Theorem}
\theoremstyle{definition}
\newtheorem{definition}{Definition}
\newtheorem{remark}{Remark}
\newtheorem{example}{Example}
\begin{document}

\author{Alberto Larrauri\\
University of Oxford\\
\texttt{alberto.larrauri@cs.ox.ac.uk}
\and
Stanislav \v{Z}ivn\'y\\
University of Oxford\\
\texttt{standa.zivny@cs.ox.ac.uk}
}

\title{Solving promise equations over monoids and groups\thanks{An extended abstract of this work appeared in the Proceedings of the ICALP 2024~\cite{Larrauri24:icalp}. This work was supported by UKRI EP/X024431/1. For the purpose of Open Access, the authors have applied a CC BY public copyright licence to any Author Accepted Manuscript version arising from this submission. All data is provided in full in the results section of this paper.}}

\date{\today}
\maketitle

\begin{abstract}
We give a complete complexity classification for the problem of finding a solution to a given system of equations over a fixed finite monoid, given that a solution over a more restricted monoid exists.
As a corollary, we obtain a complexity classification for the same problem over groups.
\end{abstract}

\section{Introduction}
Constraint satisfaction problems (CSPs) form a large class of fundamental computational problems studied in artificial intelligence, database theory, logic, graph theory, and computational complexity. 
Since CSPs (with infinite domains) capture,  
up to polynomial-time Turing reductions, \emph{all} computational problems~\cite{Bodirsky08:icalp}, some restrictions need to be imposed on CSPs in order to have a chance to obtain complexity classifications. One line of work, pioneered in the database theory~\cite{Kolaitis00:jcss}, restricts the interactions of the constraints in the instance~\cite{Grohe07:jacm,Marx13:jacm}. 

Another line of work, pioneered in~\cite{Jeavons97:closure,Feder98:monotone}, restricts the types of relations used in the instance; these CSPs are known as nonuniform CSPs, or as having a fixed template/constraint language. Such CSPs with infinite domains capture graph acyclicity, systems of linear equations over the rationals, and many other problems~\cite{bodirsky2021complexity}. Already fixed-template CSPs with finite domains form a large class of fundamental problems, including graph colourings~\cite{HellN90}, variants of the Boolean satisfiability problem, and, more generally, systems of equations over different types of finite algebraic structures. Even then, the class of finite-domain CSPs
avoided a complete complexity classification for two decades despite a sustained effort. 

In 2017, Bulatov~\cite{Bulatov17:focs} and, independently, Zhuk~\cite{Zhuk20:jacm} classified all finite-domain CSPs as either solvable in polynomial time or \NP-hard, thus answering in the affirmative the Feder-Vardi dichotomy conjecture~\cite{Feder98:monotone}.
In the effort to answer the Feder-Vardi conjecture, complexity dichotomies 
were established for restricted 
fragments of CSPs, e.g., conservative CSPs~\cite{Bulatov11:tocl}, and
equations over finite algebraic structures such as groups~\cite{Goldmann02:ic} and monoids~\cite{KTT07:tcs}. In particular, while 
systems of equations\footnote{Some papers use the term a \emph{linear} equation.} over Abelian groups are solvable in polynomial time, they are \NP-hard over non-Abelian groups~\cite{Goldmann02:ic}.

One of the recent research directions in constraint satisfaction that has attracted a lot of attention is the area of \emph{promise} CSPs (PCSPs)~\cite{AGH17,BG21:sicomp,BBKO21}. The idea is that each constraint has two versions, a strong version and a weak version. Given an instance, one is promised that a solution satisfying all strict constraints exists and the goal is to find a solution satisfying all weak constraints, which may be an easier task.
The prototypical example is the approximate graph colouring problem~\cite{GJ76}: Given a $3$-colourable graph, can one find a $6$-colouring? The complexity of this problem is open (but believed to be \NP-hard). 
Despite a flurry of papers on PCSPs, e.g.,~\cite{Ficak19:icalp,AB21,Barto21:stacs,bz22:ic,NZ22,NZ24:talg,Barto22:soda,Atserias22:soda,cz23stoc:ba,BGS23,bgs_robust23stoc,Dalmau24:lics,cz23sicomp:clap}, the PCSP complexity landscape is widely open and unexplored. 
It is not even clear whether a dichotomy should be expected. Even the case of Boolean PCSPs remain open, the state-of-the-art being a dichotomy for Boolean \emph{symmetric} PCSPs~\cite{Ficak19:icalp}. This should be compared with Boolean (non-promise)  CSPs, which were classified by Schaefer in 1978~\cite{Schaefer78:stoc}. Schaefer's tractable cases include the classic and well-known examples of CSPs: equations and graph colouring. Both have been  studied on non-Boolean domains and their complexity is well understood.
However, the complexity of the promise variant of these fundamental problems is open. 
The first problem, graph colouring, leads to the already mentioned approximate graph colouring problem, which is a notorious open problem, despite recent progress~\cite{BBKO21,KOWZ23}. In this paper, we look at the second problem, and study PCSPs capturing systems of equations. 

\mypar{Contributions}
The precise statements of all our main results are presented in~\Cref{sec:results}. 

As our most important contribution, in~\Cref{sec:equations} we establish a complexity dichotomy for PCSPs capturing \emph{promise} systems of equations over finite monoids, and over finite groups as a special case. Perhaps unsurprisingly, the tractability boundary is linked to the notion of Abelianness, just like in the non-promise setting, but the result is non-trivial and requires some care. 
Our main tool for studying the computational complexity of PCSPs is the so-called ``algebraic approach'', relying crucially on the notion of a \emph{polymorphism}. Polymorphisms can be seen as high-dimensional symmetries of a PCSP template and capture the complexity of the underlying computational problem~\cite{BG21:sicomp,BBKO21}. Polymorphisms of a PCSP template form a \emph{minion}~\cite{BBKO21}; that is, a family of functions that is closed under permuting arguments, identifying arguments, and adding dummy arguments. As we shall see later, it is useful to study more abstract minions, not only families of functions, cf.~\cite{BBKO21,BGWZ20}. Following the approach from~\cite{BBKO21}, hardness of a PCSP is established by showing that the associated polymorphism minion is, in some sense, limited. Conversely, if this minion is rich enough then the PCSP can be shown to be solvable via some efficient algorithm~\cite{BBKO21,BGWZ20,cz23sicomp:clap,bgs_robust23stoc}.

To prove our main result, we study a class of minions that arise naturally from monoids, which we call monoidal minions. In~\Cref{sec:monoidal} 
we show a complexity dichotomy for PCSPs whose polymorphism minions are homomorphically equivalent to some monoidal minion. This is our second contribution, which may be of independent interest.
In particular, the concept of monoidal minions captures studied minions, cf.~\Cref{rem:minions} in~\Cref{sec:results}.

All our tractability results use solvability by the $\BLP+\AIP$ algorithm~\cite{BGWZ20}. In fact, tractable PCSPs corresponding to promise systems of equations over monoids are finitely tractable in the sense of~\cite{BG21:sicomp,AB21}. In the special case of promise systems of equations over groups, the affine integer programming ($\AIP$) algorithm~\cite{BG21:sicomp,BBKO21} suffices, rather than $\BLP+\AIP$. However, $\AIP$ is provably not enough to solve promise equations over general monoids.  

As our final contribution, in~\Cref{sec:semigroups} we show that our dichotomy for systems of equations over monoids cannot be easily extended to semigroups, as this would imply a dichotomy for all PCSPs. We do so by showing that every PCSP is polynomial-time equivalent to a
PCSP capturing systems of equations over semigroups, a phenomenon observed  
for CSPs in~\cite{KTT07:tcs}.

\mypar{Related work}
PCSPs are a qualitative approximation of CSPs; the goal is still to satisfy all constraints, but in a weaker form. A recent related line of work includes the series~\cite{Bhangale21:stoc,Bhangale23:stocII,Bhangale23:stocIII}.
A traditional approach to approximation is quantitative: maximising the number of satisfied constraints. Regarding approximation of equations, 
H{\aa}stad showed that, for any Abelian group $G$ and any $\varepsilon>0$,
it is \NP-hard to find a solution satisfying $1/|G|+\varepsilon$
constraints~\cite{Hastad01} even if $1-\varepsilon$ constraints can be satisfied. Hence, the random assignment, which satisfies $1/|G|$ constraints, is optimal!
H{\aa}stad's result has been extended to non-Abelian groups
in~\cite{Engebretsen04:tcs,Bhangale21:stoc}.
Systems of equations have been studied, e.g., over semigroups in~\cite{Seif03:jc},
over monoids and semigroups in~\cite{KTT07:tcs}, and over arbitrary finite
algebras in~\cite{Larose06:ijac,kompatscher2018equation,Bodirsky21:jml,Mayr23:mfcs}.

The work of Nakajima and \v{Z}ivn\'y on symmetric functional
PCSPs~\cite{NZ24:talg} is somewhat related to PCSPs and equations but is incomparable to the work in the present article. 
Since we do not need it, we do not use the language of category theory but we remark that minions and other concepts can be presented in a category-theoretical way, cf.~\cite{NZ24:talg,Dalmau24:lics}.

\section{Preliminaries}
\label{sec:prelims}

We denote by $[k]$ the set $\{1,2,\ldots,k\}$. We write $\id_X$ for the identity map on a set $X$. We use the lowercase boldface font for tuples; e.g., we write $\bm{b}$ for a tuple $(b_1,\dots, b_n)$. We say that a function $f$ \emph{extends} another function $g$ if $\dom(g)\subseteq \dom(f)$, and $\Crestrict{f}{\dom(g)}=g$.

\mypar{Algebraic structures}
A \emph{semigroup} $S$ is a set equipped with an associative binary operation, for which we use multiplicative notation.
Two elements $a,b\in S$ commute if $ab=ba$. An element $a$ is \emph{idempotent}
if $aa=a$.
An \emph{Abelian} semigroup is a semigroup in which every two elements 
commute.
A \emph{semigroup homomorphism} from a semigroup $S_1$ to a semigroup $S_2$ is a map $\varphi:S_1\to S_2$ satisfying $\varphi(s\cdot_{S_1}t)=\varphi(s)\cdot_{S_2}\varphi(t)$.\footnote{I.e., the multiplication on the LHS is in $S_1$, whereas the multiplication on the RHS is in $S_2$.} 
Given two elements $s,t\in S$ we write $s\leqsg t$ if $s=t$ or there is an element $r\in S$ satisfying $tr=s$. Note that $\leqsg$ constitutes a preorder over any semigroup, i.e., $\leqsg$ is reflexive and transitive. We define the equivalence relation $\sim$ by  $s\sim t$ whenever $s\leqsg t$ and $t\leqsg s$.

A \emph{monoid} is a semigroup  containing an identity element for its binary operation, denoted by $e$.
A \emph{monoid homomorphism} from a monoid $M_1$ to a monoid $M_2$ is a map $\varphi:M_1\to M_2$ satisfying $\varphi(x\cdot_{M_1}y)=\varphi(x)\cdot_{M_2}\varphi(y)$ and $\varphi(e_{M_1})=e_{M_2}$.
We say that $\varphi$ is \emph{Abelian} if its image $\im(\varphi)$ is an Abelian monoid.

A \emph{group} is a monoid in which each element has an inverse. 
A \emph{group homomorphism} 
from a group $G_1$ to a group $G_2$ is a map $\varphi:G_1\to G_2$ satisfying
$\varphi(x\cdot_{G_1}y)=\varphi(x)\cdot_{G_2}\varphi(y)$ (which implies that also the inverses and the identity element are preserved). \par

Given a semigroup $S$, a subset $G\subseteq S$ is called a \emph{subgroup} if $G$ equipped with $S$'s binary operation is a group, meaning that there is a distinguished element $e_G\in G$ satisfying that (1) $e_G\cdot_M g = g\cdot_M e_G = g$ for each $g\in G$, and (2) 
for each element $g\in G$ there exists $h\in G$ satisfying $g\cdot_M h= h \cdot_M g = e_G$. We say that $S$ is a \emph{union of subgroups} if every element $s\in S$ belongs to a subgroup of $S$.

We call an element $s$ of a semigroup $S$ \emph{regular}\footnote{In the
extended abstract of this work~\cite{Larrauri24:icalp}, we required that
$s^2t=s$ and $st=ts$ for some $t\in S$. For a finite semigroup $S$, this is
equivalent to requiring that $s^2t=s$ for some $t\in S$ as if this second
condition holds, then $s^k=s$ for some $k>1$
by~\Cref{lem:regularity}\,(2), implying the existence of $t$ that commutes with $s$ and satisfies $s^2t=s$.} if $s^2t=s$ for some $t$ in $S$.\footnote{The usual definition of a regular element in a semigroup, which is weaker, requires that $sts=s$ for some $t$~\cite{howie1995fundamentals}. What we call regular is often called completely regular.}
 Intuitively, $t$ acts as some type of inverse of $s$. It is known that $s$ belongs to a subgroup of $S$ if and only if $s$ is regular~\cite[Theorem 2.2.5]{howie1995fundamentals}. We will make use of the following equivalent characterisations of regularity.
 
\begin{lemma}\label{lem:regularity}
    Let $S$ be a finite semigroup and $s\in S$.
    Then the following are equivalent:
    \begin{enumerate}
      \item[(1)] \label{it:reg1} $s$ is regular,
      \item[(2)] \label{it:reg2} $s^k=s$ for some $k>1$,
      \item[(3)] \label{it:reg3} $s$ belongs to a subgroup of $S$,
      \item[(4)] \label{it:reg4} $s \leqsg s^2$.
    \end{enumerate}
\end{lemma}

\begin{proof}
\begin{description}
\item[(3) $\implies$ (2):] For any finite group $G$ there exists number $k>1$ such that $g^k=g$ for all $g\in G$. \par
\item[(2) $\implies$ (3):] Consider the set $G = \{ s^\ell \mid 1\leq \ell < k
    \}$. We claim that $G$ is a group whose identity is $s^{k-1}$. By~(2), $s^{k-1}$ acts as a multiplicative identity in $G$. Moreover, given any $1\leq \ell < k-1$, the inverse of $s^\ell$ in $G$ is simply $s^{k-1-\ell}$. \par
    \item[(1) $\implies$ (4):] By the definition of regularity, there is some element $t$ such that $s^2t=s$, meaning that $s\leqsg s^2$. \par
    \item[(2) $\implies$ (1):] If $k=2$, let $t=s$. Otherwise, if $k>2$, let $t=s^{k-2}$. Then we have $s^2t= s$. \par
    \item[(4) $\implies$ (2):] By assumption, $s^2 t = s$ for some $t\in S$. 
    Note that this implies that $s^{k+1} t^k = s$ for all $k\geq 1$. 
    As $S$ is finite, there must be numbers $k > \ell > 1$ satisfying that $s^k= s^\ell$. Then, the following chain of identities holds
    \[
    s = s^k t^{k-1} =  s^\ell t^{k-1} =  s^\ell t^{\ell-1} t^{k-\ell}
     = s t^{k-\ell}.    
    \]
    In particular, this means that $s^{k^\prime} = s^{k^\prime} t^{k-\ell}$ for any $k^\prime\geq 1$.
    Finally, it also holds that $s = s^{k-\ell + 1} t^{k-\ell}$, which together
    with the last equality yields that
    $s= s^{k-\ell + 1}$. Since $\ell<k$, we have
    $k-\ell + 1 > 1$, thus establishing~(2).
 \end{description}   
\end{proof}

\par

We use the standard product (and also the power) of a semigroup (monoid, group), where the operation is defined component-wise.
We use the symbol $\substr$ for a substructure; e.g., if $S$ is a semigroup then we write $T\substr S$ to indicate that $T$ is a subsemigroup of $S$ (and similarly for monoids and groups).

Unless stated explicitly otherwise, all semigroups, monoids, and groups in this paper are finite.
 
\mypar{Relational structures}
A \emph{relational signature} $\sigma$
consists of a finite set of relation symbols $R$, each with a finite arity $\arty(R)\in\NN$.
A \emph{relational structure} $\A$ over the signature $\sigma$,
or a $\sigma$-structure, 
consists of a finite set $A$ and a relation $R^\A\subseteq A^k$ of arity $k=\arty(R)$ for every $R\in\sigma$.
Let $\A$ and $\B$ be two $\sigma$-structures. A map $h:A\to B$ is called a \emph{homomorphism} from $\A$ to $\B$ if $h$ preserves all relations in $\A$; i.e., if, for every $R\in\sigma$, 
$h(\tuple{x})\in R^\B$ whenever $\tuple{x}\in R^\A$, where $h$ is applied component-wise. We denote the existence of a homomorphism from $\A$ to $\B$ by writing $\A\to\B$.
A \emph{template} is a pair $(\A,\B)$ of relational structures such that $\A\to\B$.

A $k$-ary \emph{polymorphism} of a template $(\A,\B)$ over signature $\sigma$ is a map $p:A^k\to B$ that preserves all relations $R^\A$ from $\A$ in the following sense: For any $\arty(R)\times k$ matrix whose columns belong to $R^\A$, applying $p$ row-wise results in a tuple that belongs to $R^\B$.
We denote by $\pol(\A,\B)$ the set of all polymorphisms of $(\A,\B)$.\footnote{Equivalently, $p$ is a polymorphism of $(\A,\B)$ if $p$ is a homomorphism from the $k$-th power of $\A$ to $\B$.}

The $i$-th coordinate of a map $p:A^k\to A$ is called \emph{essential} if there exist
$a_1,\ldots,a_k\in A$ and $a'_i\in A$ such that $p(a_1,\ldots,a_k)\neq
p(a_1,\ldots,a_{i-1},a'_i,a_{i+1},\ldots,a_k)$. A coordinate that is not
essential is called \emph{inessential}.
A map $p:A^k\to A$ is called \emph{idempotent} if $p(x,\ldots,x)=x$.

\mypar{Minions}
A \emph{minion} $\Mscr$ is a collection of sets $\Mscr(n)$, one for each positive number $n$, such that, for each map $\pi: n \rightarrow m$, there is a map $\pi^\Mscr: \Mscr(n) \rightarrow \Mscr(m)$ satisfying (1) $\id_{[n]}^\Mscr= \id_{\Mscr(n)}$
for every $n\geq 10$, and (2)
$\pi^\Mscr \circ \tau^\Mscr = 
(\pi\circ \tau)^\Mscr$ for every pair of suitable maps $\pi, \tau$. When the minion is clear from the context, we write $p^{(\pi)}$ for $\pi^\Mscr(p)$. Elements $p\in \Mscr(n)$ are called $n$-ary or as having \emph{arity} $n$. Whenever $p^{(\pi)}=q$ we say that $q$ is a \emph{minor} of $p$. A \emph{minion homomorphism}
$\xi: \Mscr \to \mathscr{N}$ is a collection of maps $\xi_n:\Mscr(n) \rightarrow
\mathscr{N}(n)$ for each $n\geq 1$ that preserve minor operations; that is, $\xi_m(p^{(\pi)})=(\xi_n,(p))^{(\pi)}$ for every minor $p^{(\pi)}$, where $\pi:[n]\rightarrow [m]$. 
\par

Given a template $(\A,\B)$, its set of polymorphisms $\pol(\A,\B)$ can be
equipped with a minion structure in a natural way: The $n$-ary elements of $\pol(\A, \B)$ are just 
$n$-ary polymorphisms $p: A^n \rightarrow B$. Additionally, given an $n$-ary
polymorphism $p$ and a map $\pi:[n] \rightarrow [m]$, the minor $p^{(\pi)}$ is the polymorphism
$q: A^m \rightarrow B$ given by $(a_1, \dots, a_m) \mapsto p(b_1, \dots, b_n)$,
where $b_i= a_{\pi(i)}$ for each $i\in [n]$.\par

Given a minion $\Mscr$, we define two special types of elements. An element 
$p\in \Mscr(2k+1)$ is called \emph{alternating} if
$p^{(\pi)}=p$ for any permutation $\pi:[2m+1]\to[2m+1]$ that preserves
parity, and $p^{(\pi_1)}=p^{(\pi_2)}$, where for each $i=1,2$ the map $\pi_i$
is given by $1\mapsto i$, $2\mapsto i$ and $j\mapsto j$ for all $j>2$. 
An element $p\in \Mscr(2k+1)$ is called \emph{$2$-block-symmetric}
if the set $[2k+1]$ can be partitioned into two blocks of size $k+1$ and $k$ in such a
way that $p^{(\pi)}=p$ for any map $\pi:[2m+1]\to[2m+1]$ that preserves each block

\mypar{Constraint satisfaction}
Let $(\A,\B)$ be a template with common signature $\sigma$. The \emph{promise constraint satisfaction problem} (PCSP) with template $(\A,\B)$ is the following computational problem, denoted by $\pcsp(\A,\B)$. Given a $\sigma$-structure $\X$, output \textsc{Yes} if $\X\to\A$ and output \textsc{No} if $\X\not\to\B$. This is the decision version. In the search version, one is given a $\sigma$-structure $\X$ with the promise that $\X\to\A$; the goal is to find a homomorphism from $\X$ to $\B$ (which necessarily exists, as $\X\to\A$ and $\A\to\B$, and homomorphisms compose). It is known that the decision version polynomial-time reduces to the search version (but it is not known whether the two variants are polynomial-time equivalent)~\cite{BBKO21}. In our results, the positive (tractability) results are for the search version, whereas the hardness (intractability) results are for the decision version. 
We denote by $\csp(\A)$ the problem $\pcsp(\A,\A)$; this is the standard (non-promise) constraint satisfaction problem (CSP). For CSPs, the decision version and the search version are polynomial-time equivalent~\cite{Bulatov05:classifying}.

We need two existing algorithms for PCSPs, namely the $\AIP$ algorithm~\cite{BBKO21} and the strictly more powerful $\BLP+\AIP$ algorithm~\cite{BGWZ20}. Their power is captured by the following results.
\begin{theorem}[\cite{BBKO21}]\label{thm:aip}
  Let $(\A,\B)$ be a template. Then 
  $\pcsp(\A,\B)$ is solved by $\AIP$ if and only if $\pol(\A,\B)$ contains
  alternating maps of all odd arities.
\end{theorem}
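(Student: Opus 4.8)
The plan is to reduce the statement to a combinatorial fact about the \emph{affine minion} $\mathscr{Z}$ and then prove that fact from the two defining identities of alternating operations. Recall that $\mathscr{Z}$ has $n$-ary part $\mathscr{Z}(n)=\{\tuple{z}\in\mathbb{Z}^n:\sum_{i=1}^n z_i=1\}$, with the minor of $\tuple{z}$ along $\pi\colon[n]\to[m]$ being the tuple whose $j$-th entry is $\sum_{i\in\pi^{-1}(j)}z_i$. The \AIP{} algorithm run on an instance $\X$ is exactly the integer-feasibility test of the linear system whose solutions are the homomorphisms from $\X$ into the free structure of $\A$ relative to $\mathscr{Z}$. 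Hence, by the general correspondence between minion-based relaxations and minion homomorphisms from~\cite{BBKO21}, \AIP{} solves $\pcsp(\A,\B)$ if and only if there is a minion homomorphism $\xi\colon\mathscr{Z}\to\pol(\A,\B)$. The theorem therefore follows once I prove, for an arbitrary minion $\mathscr{N}$, that a minion homomorphism $\mathscr{Z}\to\mathscr{N}$ exists if and only if $\mathscr{N}$ contains alternating elements of all odd arities.

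For the forward direction I would use the distinguished vectors $\tuple{a}_k=(1,-1,1,\dots,-1,1)\in\mathscr{Z}(2k+1)$, with $k+1$ entries equal to $+1$ and $k$ entries equal to $-1$. A direct computation of fibre-sums shows that $\tuple{a}_k$ is itself alternating inside $\mathscr{Z}$: any parity-preserving permutation fixes it (each coordinate is determined by its parity), and both collapse minors $\tuple{a}_k^{(\pi_1)},\tuple{a}_k^{(\pi_2)}$ equal $(0,0,1,-1,\dots,1)$. Since a minion homomorphism commutes with every minor, it maps $\tuple{a}_k$ to an element of $\mathscr{N}(2k+1)$ satisfying the same two identities, i.e.\ to an alternating element; this produces alternating elements of all odd arities in $\mathscr{N}$.

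For the converse, suppose $\mathscr{N}$ contains alternating elements of all odd arities. Every $\tuple{z}\in\mathscr{Z}(n)$ has a unique reduced form $z_i=P_i-Q_i$ with $P_i,Q_i\ge 0$ and $P_iQ_i=0$; writing $Q=\sum_i Q_i$ (so $\sum_i P_i=Q+1$), the tuple $\tuple{z}$ is the minor of $\tuple{a}_Q$ along any map $\pi\colon[2Q+1]\to[n]$ sending $P_i$ of the positive positions and $Q_i$ of the negative positions to coordinate $i$. I would then set $\xi(\tuple{z})$ to be the corresponding minor of a chosen alternating element $p_Q\in\mathscr{N}(2Q+1)$. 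Independence of the choice of $\pi$ is immediate from parity-preserving invariance of $p_Q$, since any two such maps differ by a parity-preserving permutation of $[2Q+1]$.

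The main obstacle is to verify that $\xi$ respects minors, that is $\xi(\tuple{z}^{(\sigma)})=\xi(\tuple{z})^{(\sigma)}$. The difficulty is that fibre-summation can create cancellations, so the reduced form of $\tuple{z}^{(\sigma)}$ may involve strictly fewer signs than that of $\tuple{z}$, and the two sides are then defined through alternating elements of different arities. Bridging them requires cancelling $(+1,-1)$ pairs one at a time, which is precisely what the collapse identity $p^{(\pi_1)}=p^{(\pi_2)}$ supplies (after using parity-preserving symmetry to move a pair into positions $1,2$). To make this rigorous I would first replace the given alternating elements by a \emph{coherent} family in which $p_{k-1}$ is realised as a fixed cancellation minor of $p_k$, check via the two identities that such minors are again alternating and mutually compatible, and only then define $\xi$ against this coherent family; the homomorphism property then reduces to the single-pair cancellation step. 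Assembling this coherence argument and confirming that it interacts correctly with arbitrary fibre-sums is the technical heart of the proof.
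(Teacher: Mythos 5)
Note first that the paper does not prove this statement at all: it is quoted verbatim from~\cite{BBKO21} as \Cref{thm:aip}, so your proposal can only be assessed on its own merits. Your overall decomposition (AIP solvability $\Leftrightarrow$ existence of a minion homomorphism from the affine minion $\mathscr{Z}$, then a combinatorial equivalence with alternating elements) is the right one, and your forward direction is correct. The gap is in the converse, precisely at the point you yourself flag as ``the technical heart''. The lemma you reduce to---that for an \emph{arbitrary} minion $\mathscr{N}$ a minion homomorphism $\mathscr{Z}\to\mathscr{N}$ exists iff $\mathscr{N}$ contains alternating elements of all odd arities---is false as stated. Counterexample: let $\mathscr{N}(n)$ consist of all pairs $(\tuple{z},K)$ with $K\in\NN$ and $\tuple{z}\in\mathscr{Z}(n)$ a minor of $\tuple{a}_K$, with minors acting on the first coordinate only. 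Each $(\tuple{a}_k,k)$ is alternating, so alternating elements of every odd arity exist. But composing any minion homomorphism $\xi\colon\mathscr{Z}\to\mathscr{N}$ with the projection to the first coordinate gives a minion endomorphism of $\mathscr{Z}$, and such an endomorphism must fix every $\tuple{a}_Q$: invariance under parity-preserving permutations forces the image to have entries constant on each parity class, the collapse identity forces the two constants to sum to $0$, and the total sum $1$ then forces the image to be exactly $\tuple{a}_Q$. Since any two elements of $\mathscr{Z}$ are minors of a common $\tuple{a}_Q$ and minors preserve the coordinate $K$, all images of $\xi$ carry one fixed value $K_0$; but $\tuple{a}_Q$ is not a minor of $\tuple{a}_{K_0}$ when $Q>K_0$ (its negative part is too large), a contradiction. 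So no such $\xi$ exists.

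Consequently, the ``coherent family'' cannot be obtained the way you describe: redefining $p'_{k-1}$ as a cancellation minor of $p_k$ does not make the new family coherent ($p'_{k-1}$ is then a minor of $p_k$, not of $p'_k$), and since arities are unbounded there is no top element from which to iterate downwards. The missing ingredient is local finiteness: because $\A$ and $\B$ are finite structures, each set $\pol(\A,\B)(n)$ is finite, so one can form, for every $K\ge k$, the iterated cancellation minors $q_{k,K}$ of $p_K$ (your identities do show these are again alternating and independent of which pairs are cancelled), and then extract an infinite coherent branch by K\"onig's lemma applied to the finitely-branching tree of finite coherent sequences. Only against such a branch can your map $\xi$ be defined and verified to commute with arbitrary fibre-sums. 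With this compactness step added, your argument goes through; without it, the general statement you are actually proving is false, so this is a genuine gap rather than routine bookkeeping.
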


\begin{theorem}[\cite{BGWZ20}]\label{thm:blpaip}
    Let $(\A,\B)$ be a template. Then
    $\pcsp(\A,\B)$ is solved by $\BLP+\AIP$ if and only if $\pol(\A,\B)$ contains $2$-block-symmetric maps of all odd arities.
\end{theorem}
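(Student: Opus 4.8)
The plan is to prove both implications directly, using the observation that ``$\BLP+\AIP$ solves $\pcsp(\A,\B)$'' is equivalent to \emph{soundness}: every instance $\X$ accepted by the algorithm satisfies $\X\to\B$ (acceptance of all $\X$ with $\X\to\A$ is automatic, since a genuine homomorphism yields integral $\{0,1\}$-valued $\BLP$ and $\AIP$ solutions). Recall that an accepting run on $\X$ provides, for every variable $x$, a rational probability distribution $\lambda_x$ over $A$ together with an integer vector $\lambda'_x\colon A\to\mathbb{Z}$ summing to $1$ and supported on $\{a:\lambda_x(a)>0\}$, and analogous data $\mu_C,\mu'_C$ over $R^\A$ for each constraint $C=(R,\mathbf{x})$, all satisfying the usual marginal identities. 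The bridge to $2$-block-symmetric operations is the following dictionary: an operation of arity $N=2k+1$ with blocks $I_1,I_2$ of sizes $k+1,k$ is determined by the multiset of its inputs in $I_1$ and in $I_2$; placing a value $a$ into $c^{(1)}_a$ slots of $I_1$ and $c^{(2)}_a$ slots of $I_2$ encodes the pair $(c^{(1)}_a+c^{(2)}_a,\,c^{(1)}_a-c^{(2)}_a)$, whose coordinate sums are $N$ and $1$ --- exactly matching a (scaled) $\BLP$-marginal and an $\AIP$-marginal.

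For the ``if'' direction I assume $\pol(\A,\B)$ contains $2$-block-symmetric maps $p_N$ of all odd arities and show soundness. Given an accepted instance $\X$, I first fix an odd $N=2k+1$ that is divisible by every denominator occurring in the $\lambda_x,\mu_C$ and large enough (with the correct parity) that $N\lambda_x(a)\ge|\lambda'_x(a)|$ whenever $\lambda_x(a)>0$; the support condition guarantees such an $N$ exists. For each variable $x$ and value $a$ I then solve $c^{(1)}_{x,a}+c^{(2)}_{x,a}=N\lambda_x(a)$ and $c^{(1)}_{x,a}-c^{(2)}_{x,a}=\lambda'_x(a)$ over the nonnegative integers, and set $h(x)=p_N(\mathbf{t}_x)$, where $\mathbf{t}_x\in A^N$ places $a$ in $c^{(1)}_{x,a}$ slots of $I_1$ and $c^{(2)}_{x,a}$ slots of $I_2$. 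To see that $h$ is a homomorphism $\X\to\B$, for each constraint $C=(R,\mathbf{x})$ I use $\mu_C,\mu'_C$ to assign multiplicities $d^{(1)}_{C,\mathbf{s}},d^{(2)}_{C,\mathbf{s}}$ to tuples $\mathbf{s}\in R^\A$ in the two blocks; the marginal identities ensure that, read off coordinate by coordinate, these multiplicities agree with the $c^{(i)}_{x_j,\cdot}$. Hence there is an $\arty(R)\times N$ matrix whose columns all lie in $R^\A$ and whose $j$-th row equals $\mathbf{t}_{x_j}$ up to a block-preserving permutation; applying $p_N$ row-wise, block-symmetry lets me ignore that permutation, and the polymorphism property gives $(h(x_1),\dots)=p_N(\text{columns})\in R^\B$.

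For the ``only if'' direction I assume $\BLP+\AIP$ solves $\pcsp(\A,\B)$ and build, for each $k$, a $2$-block-symmetric polymorphism of arity $N=2k+1$. Consider the $N$-th power $\A^{N}$ as an instance (variables $A^{N}$, with the rows of every matrix of $N$ columns from $R^\A$ forming a constraint tuple), and quotient it by the relation $\approx$ that identifies two tuples differing by a block-preserving permutation of $[N]$; call the result $\X'$. A homomorphism $\X'\to\B$ is precisely a $2$-block-symmetric $N$-ary polymorphism, so it suffices to show the algorithm accepts $\X'$ (note that $\X'\not\to\A$ in general, so the hypothesis is genuinely used). I exhibit the canonical solution: for a class $[\mathbf{a}]$ set $\lambda_{[\mathbf{a}]}(b)=\tfrac1N|\{i:a_i=b\}|$ and $\lambda'_{[\mathbf{a}]}(b)=|\{i\in I_1:a_i=b\}|-|\{i\in I_2:a_i=b\}|$, and analogously for constraints; these are well defined because $\approx$ preserves the per-block multisets, they satisfy the marginal identities, and the $\lambda'$ are integral with total $1$ and supported on the $\BLP$-support, so $\BLP+\AIP$ accepts. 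Soundness then yields $\X'\to\B$, as required.

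I expect the ``if'' direction to be the main obstacle: the delicate points are choosing $N$ with the right divisibility and parity, solving the block-count equations in nonnegative integers uniformly across variables and constraints, and above all checking that the per-constraint and per-variable block assignments are mutually consistent so that the columns of the assembled matrix genuinely lie in $R^\A$. This last consistency is exactly what the $\BLP$/$\AIP$ marginal identities are for, and verifying it carefully --- together with the bookkeeping that block-symmetry absorbs all within-block reorderings --- is where the real work lies.
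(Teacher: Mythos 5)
First, a point of order: the paper does not prove this statement at all --- it is quoted from~\cite{BGWZ20} --- so the comparison below is with the proof given there. Your overall architecture (reformulating ``solves'' as soundness, rounding an accepting run with a $2$-block-symmetric polymorphism for the ``if'' direction, and feeding a block-symmetrised power of $\A$ to the algorithm for the ``only if'' direction) matches that proof, and your ``only if'' direction is correct as written. The gap is in the ``if'' direction, exactly at the step you yourself flag as delicate: the choice of $N$ and the solvability of the count equations $c^{(1)}_{x,a}+c^{(2)}_{x,a}=N\lambda_x(a)$, $c^{(1)}_{x,a}-c^{(2)}_{x,a}=\lambda'_x(a)$ in nonnegative integers. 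This step fails and cannot be repaired by any choice of $N$. First, an odd $N$ divisible by every denominator need not exist: if some $\lambda_x(a)=1/2$ (a completely typical value), no odd $N$ is divisible by $2$, and without divisibility $N\lambda_x(a)$ is not an integer, so the system has no integer solutions at all. Second, even when all denominators are odd, the necessary parity condition $N\lambda_x(a)\equiv\lambda'_x(a)\pmod 2$ can fail for every admissible $N$: take $\lambda_x=(1/3,2/3)$ and $\lambda'_x=(0,1)$; any odd $N$ divisible by $3$ is of the form $3m$ with $m$ odd, so $(N\lambda_x(a))_a$ has parities $(\text{odd},\text{even})$ while $\lambda'_x$ has parities $(\text{even},\text{odd})$, and both equations are unsolvable. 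Since soundness must be established for whatever $\lambda,\lambda'$ the accepting run supplies, this is a genuine gap, not bookkeeping.

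The known repair, which is in effect what \cite{BGWZ20} does, is to give up on making the block sums proportional to $\lambda$ and to use asymmetric counts. Let $D$ be a common denominator of all $\lambda_x$ and $\mu_C$ values, let $M$ be a large integer, and set $k=DM$, $N=2k+1$, $c^{(1)}_{x,a}=DM\lambda_x(a)+\lambda'_x(a)$, $c^{(2)}_{x,a}=DM\lambda_x(a)$, and analogously $d^{(1)}_{C,\mathbf{s}}=DM\mu_C(\mathbf{s})+\mu'_C(\mathbf{s})$, $d^{(2)}_{C,\mathbf{s}}=DM\mu_C(\mathbf{s})$ for constraints. These are integers by construction, nonnegative for large $M$ by the support condition, they sum to $k+1$ and $k$ as required, the block difference is still the $\AIP$ value, and the $\BLP$/$\AIP$ marginal identities give exactly the row--column consistency you need; the remainder of your argument (assembling the matrix constraint by constraint and letting block-symmetry absorb within-block permutations) then goes through verbatim. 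Equivalently, one rounds the perturbed weights $((N-1)\lambda+\lambda')/N$ rather than $\lambda$ itself.
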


\section{Overview of Results}
\label{sec:results}

\mypar{Promise equations over monoids and groups}
Our first and main result is a dichotomy theorem for solving promise equations over finite monoids and thus also, as a special case, over finite groups. We first define equations in the standard, non-promise setting as it is useful for mentioning previous work and for our own proofs.

An \emph{equation} over a semigroup $S$
is an expression of the form $x_1\dots x_n = y_1 \dots y_m$, where each $x_i, y_i$ is either a variable or some element from $S$, referred to as a \emph{constant}.  A \emph{system of equations} over $S$ is just a set of equations. A solution to such a system is an assignment of elements of $S$ to the variables of the system
that makes all equations hold. 
Equations and systems of equations are defined similarly for monoids and groups. The only difference is that for groups we allow ``inverted variables'' $x^{-1}$ in the equations, which are interpreted as inverses of the elements assigned to $x$.

In the context of CSPs, it is common to consider only restricted ``types'' of equations that can then express all other equations. The following definition captures systems of equations where each equation is either of the form $x_1 x_2= x_3$,
for three variables, or $x=c$, fixing a variable to a constant. It is well known that restricting to systems of equations of this kind is without loss of generality, cf.~\Cref{app:red3}.

\begin{definition}
Let $S$ be a semigroup and $T\substr S$ a subsemigroup.
The relational structure $\Lin(S,T)$  has universe $S$, and the following relations: 
\begin{itemize}
    \item A ternary relation $R_{\times} = \{
    (s_1,s_2,s_3) \in S^3 \mid s_1s_2=s_3 \}$, and
    \item a singleton unary relation $R_t=\{t\}$ for each $t\in T$.
\end{itemize}
\end{definition}

This template captures systems of equations of the kind described above when we
allow only constants in a subsemigroup $T$ of the ambient semigroup $S$.
Similarly, we define the templates $\Lin(M,N)$, $\Lin(G,H)$ in the same way when
$M$ is a monoid and $N\substr M$ a submonoid, and when $G$ is a group and
$H\substr G$ is a subgroup. Observe that the definition of subgroup is more
restrictive than the one of submonoid and this in turn is more restrictive than
the notion of subsemigroup. We slightly abuse the notation and write $\Lin(S,T)$ for $\csp(\Lin(S,T))$. \par

Previous works focused on problems $\Lin(G)=\Lin(G,G)$ and $\Lin(M)=\Lin(M,M)$. Given a group $G$, it is known that $\Lin(G)$ is solvable in polynomial time (by $\AIP$) if $G$ is Abelian, and \NP-hard otherwise~\cite{Goldmann02:ic}. Similarly, when $M$ is a monoid, $\Lin(M)$ is solvable in polynomial time if $M$ is Abelian and it is the union of its subgroups,
and \NP-hard otherwise~\cite{KTT07:tcs}. These results were shown 
before the Dichotomy Theorem for CSPs was proved~\cite{Bulatov17:focs,
Zhuk20:jacm}. The original proofs relied on ad-hoc reductions and various
notions from the theory of groups and the theory of monoids. For the sake of
completeness, we present simplified proofs of those previous results as
corollaries of the Dichotomy Theorem in~\Cref{app:csps}.

We now define promise equations. 

\begin{definition}
\label{def:promise_eqns}
Let $S_1, S_2$ be semigroups, and let $\varphi$ be a semigroup homomorphism 
with $\dom(\varphi)\substr S_1$
and $\im(\varphi)\substr S_2$. 
The \emph{promise system of equations over semigroups} problem
$\PLin(S_1,S_2,\varphi)$ is the $\pcsp(\A,\B)$, where $A=S_1$, $B=S_2$, and the relations are defined as follows:
\begin{itemize}
    \item A ternary relation
    $R_\times^\A=
    \{ (s_1,s_2,s_3)\in S_1^3 \mid s_1s_2=s_3\}$,  and
    $R_\times^\B=
    \{ (s_1,s_2,s_3)\in S_2^3 \mid s_1s_2=s_3\}$.
    \item For each $t\in \dom(\varphi)$, a unary relation given by 
    $R^\A_t=\{t\}$, and
    $R^\B_t=\{ \varphi(t) \}$.
\end{itemize}
For this template to be well defined there should be a homomorphism from $\A$ to $\B$, which is equivalent to the existence of a semigroup homomorphism $\psi: S_1 \rightarrow S_2$
that extends $\varphi$.
\end{definition}

Analogously, we also define
the \emph{promise system of equations over monoids} problem and the
\emph{promise system of equations over groups} problem by replacing semigroup-related notions with monoid-related notions and group-related notions respectively.  
Observe that the problem $\Lin(S,T)$ described before corresponds precisely to $\PLin(S,S, \id_S)$. 

We can now state our main result.

\begin{restatable}[Main]{theorem}
{promisemonoids}\label{th:promisemonoids}
       Let $M_1$, $M_2$ be monoids and $\varphi$ a monoid homomorphism with $\dom(\varphi)\substr M_1, \im(\varphi)\substr M_2$. Then 
       $\PLin(M_1,M_2,\varphi)$ is solvable in polynomial time by $\BLP+\AIP$ if and only if there is an Abelian homomorphism $\psi:M_1\rightarrow M_2$ extending $\varphi$ and $\im(\psi)$ is a union of subgroups.
       If no such homomorphism $\psi$ exists, then $\PLin(M_1,M_2,\varphi)$ is \NP-hard. 
\end{restatable}

\noindent
For the special case of groups, we get a simpler tractability criterion and a simpler algorithm.

\begin{restatable}{corollary}{promisegropus}\label{th:promisegroups}
    Let $G_1$, $G_2$ be groups and $\varphi$ a group homomorphism with $\dom(\varphi)\substr G_1, \im(\varphi)\substr G_2$. Then 
    $\PLin(G_1,G_2, \varphi)$ is solvable in polynomial time by $\AIP$ if and only if there is an Abelian homomorphism $\psi:G_1 \rightarrow G_2$ 
    extending $\varphi$. If no such homomorphism $\psi$ exists, then $\PLin(G_1,G_2,\varphi)$ is \NP-hard. 
\end{restatable}

As easy corollaries, \Cref{th:promisemonoids} applies in the special case of non-promise setting.

\begin{corollary}
Given two monoids $N\substr M$,  $\Lin(M,N)$ is solvable in polynomial time by $\BLP+\AIP$ if and only if there is an Abelian endomorphism of $M$ extending $\id_N$ whose image is a union of subgroups. If no such endomorphism exists, then $\Lin(M,N)$ is \NP-hard.
\end{corollary}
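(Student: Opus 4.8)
The plan is to derive this directly from \Cref{th:promisemonoids} by instantiating the main theorem at the non-promise template. Recall the observation made just before the statement of the main theorem, namely that $\Lin(M,N)$ coincides with $\PLin(M,M,\id_N)$: the structure $\Lin(M,N)$ has universe $M$, the ternary multiplication relation $R_\times$, and a singleton unary relation $R_t=\{t\}$ for each $t\in N$, which is exactly the template $\PLin(M,M,\id_N)$ obtained by taking $S_1=S_2=M$ and $\varphi=\id_N$, the identity map on the submonoid $N\substr M$ viewed with image $N\substr M$. The only preliminary check is that $\id_N$ is a genuine monoid homomorphism with $\dom(\id_N)=N\substr M$ and $\im(\id_N)=N\substr M$, which is immediate.

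With this identification fixed, I would apply \Cref{th:promisemonoids} with $M_1=M_2=M$ and $\varphi=\id_N$. A monoid homomorphism $\psi\colon M_1\to M_2$ extending $\varphi$ is then precisely an endomorphism $\psi\colon M\to M$ whose restriction to $N$ equals the identity, i.e.\ an endomorphism of $M$ extending $\id_N$. Such a $\psi$ is Abelian exactly when $\im(\psi)$ is an Abelian monoid, and the auxiliary requirement that $\im(\psi)$ be a union of subgroups transfers verbatim. Hence the tractability hypothesis of \Cref{th:promisemonoids}, the existence of an Abelian extension $\psi$ of $\varphi$ with $\im(\psi)$ a union of subgroups, specialises to exactly the hypothesis of the corollary: the existence of an Abelian endomorphism of $M$ extending $\id_N$ whose image is a union of subgroups.

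The two directions then follow immediately. In the tractable case, \Cref{th:promisemonoids} yields that $\PLin(M,M,\id_N)=\Lin(M,N)$ is solved by $\BLP+\AIP$; otherwise it yields $\NP$-hardness. There is essentially no obstacle beyond verifying that the phrases \emph{``Abelian homomorphism $\psi\colon M_1\to M_2$ extending $\varphi$''} and \emph{``Abelian endomorphism of $M$ extending $\id_N$''} denote the same objects under the substitution $M_1=M_2=M$, $\varphi=\id_N$, which is a direct unfolding of the definitions of endomorphism, of an extension of $\varphi$, and of an Abelian homomorphism.
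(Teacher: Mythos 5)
Your proof is correct and matches the paper's own (implicit) argument exactly: the paper states this as an easy corollary of \Cref{th:promisemonoids}, relying on precisely the identification $\Lin(M,N)=\PLin(M,M,\id_N)$ that it records just after \Cref{def:promise_eqns}, with the hypotheses specialising verbatim under $M_1=M_2=M$ and $\varphi=\id_N$. Nothing is missing.
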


\begin{corollary}
Given two groups $H\substr G$, $\Lin(G,H)$ is solvable in polynomial time by $\AIP$ if and only if there is an Abelian endomorphism of $G$ that extends $\id_H$. If no such endomorphism exists, then $\Lin(G,H)$ is \NP-hard.
\end{corollary}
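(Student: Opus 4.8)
The plan is to obtain this statement as a direct specialisation of the group corollary, \Cref{th:promisegroups}, to the non-promise setting. First I would invoke the correspondence recorded earlier in the excerpt: by our abuse of notation $\Lin(G,H)$ denotes $\csp(\Lin(G,H))$, and the problem $\Lin(S,T)$ coincides with $\PLin(S,S,\id_T)$ for any semigroup $S$ and subsemigroup $T$. Applying this with $S=G$ and $T=H$ gives $\Lin(G,H)=\PLin(G,G,\id_H)$, so it suffices to instantiate \Cref{th:promisegroups} at $G_1=G_2=G$ and $\varphi=\id_H$.

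Under this instantiation the tractability hypothesis of \Cref{th:promisegroups}, namely the existence of an Abelian homomorphism $\psi:G_1\to G_2$ extending $\varphi$, becomes the existence of an Abelian homomorphism $\psi:G\to G$ extending $\id_H$. A group homomorphism from $G$ to itself is exactly an endomorphism of $G$, and ``extending $\id_H$'' says that $\psi$ restricts to the identity on $H$; thus the hypothesis is precisely the existence of an Abelian endomorphism of $G$ extending $\id_H$, as in the statement. The dichotomous conclusion --- solvability in polynomial time by $\AIP$ in the positive case and $\NP$-hardness otherwise --- then transfers verbatim.

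Because this is a pure specialisation, I expect no genuine obstacle; essentially the only thing worth checking is that the template is well defined in this instance, i.e.\ that $\PLin(G,G,\id_H)$ really is a template. This holds because $\id_H$ is a group homomorphism with $\dom(\id_H)=\im(\id_H)=H\substr G$, and the identity endomorphism of $G$ is a group homomorphism extending $\id_H$; by the well-definedness criterion in \Cref{def:promise_eqns} this is exactly what is required. With that in place the corollary follows immediately from \Cref{th:promisegroups}.
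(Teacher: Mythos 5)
Your proposal is correct and matches the paper's own treatment: the paper presents this statement as an immediate corollary of \Cref{th:promisegroups} via the identification $\Lin(G,H)=\PLin(G,G,\id_H)$, exactly the specialisation $G_1=G_2=G$, $\varphi=\id_H$ that you carry out. Your additional check that the template is well defined (witnessed by $\id_G$ extending $\id_H$) is a harmless extra detail the paper leaves implicit.
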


\begin{example}
Let $G$ be the dihedral group on four elements, and $H$ be the symmetric group on four elements.
Observe that $G$ can be seen as a subgroup of $H$
in a natural way: $H$ consists of all permutations on four elements, while $G$ contains only those that are symmetries of the square. The group $G$ is generated by the right $90$-degree rotation $r$ and an arbitrary reflection $f$ that leaves no element fixed. We consider two group homomorphisms $\varphi_1,\varphi_2$ with $\dom(\varphi_i)\substr G$ and $\im(\varphi_i)\substr H$. The domain of both homomorphisms is the subgroup $\{e, r, r^2, r^3\}\substr G$. Then, $\varphi_1$ is given by $r\mapsto r^2$, and $\varphi_2$ is given by $r\mapsto r$. The following hold:
\begin{itemize}
    \item %
    $\PLin(G,H,\varphi_1)$ is tractable, and solvable by AIP. However both
    $\Lin(G,$ $\dom(\varphi_1))$ and $\Lin(H,\im(\varphi_1))$ are \NP-hard.
    \item 
    $\PLin(G,H,\varphi_2)$ is \NP-hard. 
\end{itemize}
To see the first item, observe that the group homomorphism $\psi:G\rightarrow H$
  given by $r\mapsto r^2$ and $f\mapsto f$ is Abelian (its image is isomorphic
  to the direct product $\mathbb{Z}_2\times \mathbb{Z}_2$) and extends $\varphi_1$. Hardness of $\Lin(G,\dom(\varphi_1))$ is a consequence of the fact that the commutator subgroup of $G$ is $\{e,r^2\}$, so
$r^2\in \dom(\varphi_1)$ is included in the kernel of any Abelian endomorphism of $G$. Similarly, hardness of $\Lin(H,\im(\varphi_1))$
follows from the fact that the commutator subgroup of $H$ is the alternating group on four elements, and has $\im(\varphi_1)$ as a subgroup. \par
The second item can be proved by observing that the only normal subgroup of $G$ that does not intersect $\dom(\varphi_2)$ is the trivial subgroup, so any homomorphism $\psi:G \rightarrow H$ that extends $\varphi_2$ needs to be injective, and thus non-Abelian. 
\end{example}

We say that $\pcsp(\A,\B)$ is \emph{finitely tractable} if there is $\C$ such
that $\A\to\C\to\B$ and $\csp(\C)$ is solvable in polynomial time. The tractable
cases in~\Cref{th:promisemonoids} are in fact finitely tractable, as the next
result shows.

\begin{lemma}\label{lem:finitelytract}
Assume that $\PLin(M_1, M_2,\varphi)$ is in the positive part of~\Cref{th:promisemonoids}; i.e., there is an Abelian homomorphism $\psi:M_1\to M_2$ extending $\varphi$ and $\im(\psi)$ is a union of subgroups.
Then, $\PLin(M_1, M_2, \varphi)$ is finitely tractable. 
\end{lemma}
\begin{proof}
   By~\Cref{th:promisemonoids} there must be some Abelian homomorphism $\psi:M_1\rightarrow M_2$  extending $\varphi$ and $\im(\psi)$ is a union of subgroups.
    Let $M\substr M_2$ be the submonoid $\im(\psi)$. By assumption $M$ is
    Abelian and a union of subgroups. Let $N\substr M$ be the submonoid
    $\im(\varphi)$. We claim that $\Lin(M, N)$ is solvable in polynomial time.
    Indeed, consider the map $\id_M$. This map is an Abelian endomorphism of
    $M$, whose image is a union of subgroups. Moreover, $\id_M$ extends $\id_N$.
    So, by~\Cref{th:promisemonoids}, $\Lin(M,N)$ is solvable in polynomial time
    by $\BLP+\AIP$. \par
    The idea now is that $\Lin(M,N)$ can be ``sandwiched'' by the template
    $(\A,\B)$ (defined in~\Cref{def:promise_eqns})
    of $\PLin(M_1, M_2, \varphi)$. To make this formal, we need to produce a template 
    $\C$ in the same signature as $\A$ and $\B$ such that $\csp(\C)$ is $\Lin(M,N)$ up to relabeling some relations. The set $C$ equals $M$. The relation $R_{\times}^\C$
    consists of the triples $(s_1,s_2,s_3)\in M^3$ such that $s_1s_2=s_3$.
    Finally, for each $s\in \dom(\varphi)$, we define $R_{s}^\C=\{ \varphi(s)
    \}$. By construction, it holds that
    $\A \mapsto \C \mapsto \B$: the map $\psi$ is a homomorphism from $\A$ to $\C$, and the inclusion map is a homomorphism from $\C$ to $\B$. On the other hand, $\csp(\C)$ is easily seen to be equivalent to $\Lin(\C)$. Indeed, we can obtain $\Lin(\C)$ from $\C$ by relabeling each relation $R_s$ to $R_{\varphi(s)}$ and removing duplicate relations, which does not change the complexity of the related CSP.
\end{proof}

The power of $\BLP+\AIP$ is necessary in~\Cref{th:promisemonoids} in the sense
that $\AIP$ does not suffice for all monoids, even for (non-promise) CSPs,
unlike in the case of groups. Indeed, adding a fresh element to a group that
serves as the monoid identity fools $\AIP$.

\begin{lemma}\label{lem:blpaipneeded}
Let $G$ be an arbitrary Abelian group. Let $M$ be the monoid resulting from adding to $G$ a fresh element $e$ that serves as the monoid identity. Then $\Lin(M, M)$ is solvable by $\BLP+\AIP$ but not by $\AIP$.
\end{lemma}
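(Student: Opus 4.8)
The plan is to handle the two assertions separately: the positive claim follows directly from \Cref{th:promisemonoids}, and the negative one from the characterisation of $\AIP$ in \Cref{thm:aip}.

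For solvability by $\BLP+\AIP$, note that $\Lin(M,M)=\PLin(M,M,\id_M)$, so by \Cref{th:promisemonoids} it suffices to exhibit an Abelian homomorphism $\psi\colon M\to M$ extending $\id_M$ whose image is a union of subgroups. The identity map $\psi=\id_M$ is the natural candidate. First I would check that $M$ is Abelian: $G$ is Abelian by hypothesis and the adjoined identity $e$ is central, so all elements commute and $\id_M$ is an Abelian homomorphism. Then I would verify that $\im(\id_M)=M$ is a union of subgroups: the singleton $\{e\}$ is a trivial subgroup containing $e$, and $G$ itself is a subgroup of $M$ (with its own identity $e_G\neq e$) containing every other element. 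Equivalently, by \Cref{lem:regularity} each element is regular, since in a finite group a suitable power of any element returns to itself. This settles tractability.

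For the failure of $\AIP$, by \Cref{thm:aip} it is enough to show that $\pol(\Lin(M,M))$ has no alternating map of arity $3$ (in fact of any arity $2k+1$ with $k\ge 1$). By \Cref{lem:polpromise}, a polymorphism is exactly a monoid homomorphism $p\colon M^n\to M$ with $p(s,\dots,s)=s$ for all $s\in M$. Writing $\delta_i(m)$ for the tuple with $m$ in coordinate $i$ and $e$ elsewhere, every tuple is the product of the commuting factors $\delta_i(m_i)$, so $p(m_1,\dots,m_n)=\prod_i f_i(m_i)$ with each $f_i\colon M\to M$, $f_i(m)=p(\delta_i(m))$, a monoid homomorphism and the product order-independent since $M$ is Abelian. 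If $p$ is alternating of arity $2k+1$, invariance under parity-preserving permutations forces $f_i$ to depend only on the parity of $i$, say $f_i=f_{\mathrm{o}}$ for odd $i$ and $f_i=f_{\mathrm{e}}$ for even $i$. Evaluating the second alternating condition $p^{(\pi_1)}=p^{(\pi_2)}$ on inputs whose coordinates $3,\dots,2k+1$ equal $e$ yields $f_{\mathrm{o}}(a)f_{\mathrm{e}}(a)=f_{\mathrm{o}}(b)f_{\mathrm{e}}(b)$ for all $a,b$; taking $b=e$ gives $f_{\mathrm{o}}(a)f_{\mathrm{e}}(a)=e$ for every $a\in M$.

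The crux, and the step I expect to be trickiest to phrase correctly, is the elementary observation that in $M$ a product equals the adjoined identity only trivially: $xy=e$ forces $x=y=e$, because any factor lying in $G$ drags the product into $G$, and $e\notin G$. Applied to $f_{\mathrm{o}}(a)f_{\mathrm{e}}(a)=e$, this collapses both $f_{\mathrm{o}}$ and $f_{\mathrm{e}}$ to the constant map $e$, whence $p(s,\dots,s)=f_{\mathrm{o}}(s)^{k+1}f_{\mathrm{e}}(s)^{k}=e$ for all $s$, contradicting $p(e_G,\dots,e_G)=e_G\neq e$. Hence no alternating polymorphism of arity $3$ exists and $\AIP$ does not solve $\Lin(M,M)$. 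The remaining care is purely bookkeeping: tracking the block sizes $k+1$ and $k$ in the diagonal evaluation, and checking that $\pi_1,\pi_2$ are read as the correct (non-injective) minor maps.
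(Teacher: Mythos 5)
Your proof is correct and follows essentially the same route as the paper: both reduce the positive part to \Cref{th:promisemonoids} via $\psi=\id_M$, and both rule out $\AIP$ by decomposing a hypothetical alternating polymorphism into coordinate homomorphisms, using the alternating minor condition to get a product equal to $e$, invoking the observation that $xy=e$ forces $x=y=e$ in $M$, and contradicting idempotency. The only difference is presentational (you evaluate the minor identity directly at inputs padded with $e$, while the paper phrases this via an inessential coordinate of an auxiliary map $q$), which does not change the substance.
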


\begin{proof}
    The fact that $\BLP+\AIP$ solves the $\Lin(M,M)$ follows by~\Cref{th:promisemonoids} from the fact that $M$ is Abelian and a union of subgroups. To rule out $\AIP$, we show that $\Lin(M,M)$ has no alternating polymorphisms; this suffices by~\Cref{thm:aip}. We begin with the following observation.
    Let $p: M^n \rightarrow M$ be a polymorphism whose $i$-th coordinate is inessential. Consider the  homomorphism $\tau_{p,i}$ that sends each element $s\in M$ to $p(e,\dots, s, \dots, e)$, where all arguments are equal to $e$ except for the $i$-th one, which is equal to $s$. Then it must be that $\tau_{p,i}$ is constant and equal to $e$. 
    Now suppose that $p$ is a $(2n+1)$-ary alternating polymorphism. Then define $q(x_1,\dots, x_{2n})=p(x_1,x_1,x_2,x_3,\dots,x_{2n-1})$. As $p$ is alternating, the first coordinate is inessential in $q$.
    By our previous observation, 
    $\tau_{q,1}$ is constant and equal to $e$. By definition,
    \[\tau_{q,1}(s)= q(s,e,\dots, e) = p(s,s,e,\dots, e)= p(s,e,\dots,e)p(e,s,e,\dots,e)=
    \tau_{p,1}(s)\tau_{p,2}(s).\]
    Hence $\tau_{p,1}(s) \tau_{p,2}(s)=e$ for all $s\in M$. 
    The only way that the product of two elements equals $e$ in $M$
    is that both elements are equal to $e$. Thus, both $\tau_{p,1}$
    and $\tau_{p,2}$ are constant and equal to $e$. This means that the first and the second coordinate are inessential in $p$.    
    However, as $p$ is
    alternating, $p$ is preserved under parity preserving permutations of its arguments, so the fact that its first and second coordinates are inessential means that in fact all its coordinates are inessential.
    However, if all coordinates of $p$ are inessential, then $p$ is constant, but this contradicts the fact that $p$ must be idempotent, as singleton
    unary relations are in $\Lin(M,M)$ and thus preserved by $p$.
\end{proof}

\mypar{Promise equations over semigroups}
As our next result,
we prove that every PCSP is polynomial-time equivalent to a problem of the form $\PLin(S_1, S_2, \varphi)$ over some semigroups $S_1,S_2$. 
Hence, extending our classification of promise equations beyond monoids is difficult in the sense that understanding the computational complexity of promise equations over semigroups is as hard as classifying all PCSPs. This result is analogous to the one known in the non-promise setting obtained in~\cite{KTT07:tcs}, whose proof we closely follow. One difficulty in lifting the result from~\cite{KTT07:tcs} is the lack of constants in the promise setting. 
The details can be found in~\Cref{sec:semigroups}.

\begin{theorem}\label{th:semigroups}
    Let $(\A,\B)$ be a template. Then there are semigroups $S_1,S_2$ and a semigroup homomorphism $\varphi$ with $\dom(\varphi)\substr S_1$ and $\im(\varphi) \substr S_2$ such that 
    $\pcsp(\A,\B)$ is polynomial-time equivalent to $\PLin(S_1, S_2, \varphi)$. 
\end{theorem}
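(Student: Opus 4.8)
The plan is to follow the strategy of \cite{KTT07:tcs} for the non-promise setting, building from the template $(\A,\B)$ two finite semigroups $S_1,S_2$ together with a partial semigroup homomorphism $\varphi$, and then exhibiting gadget reductions in both directions between $\pcsp(\A,\B)$ and $\PLin(S_1,S_2,\varphi)$. Fix a homomorphism $h:\A\to\B$ witnessing that $(\A,\B)$ is a template. The two semigroups will be built by the \emph{same} syntactic recipe applied to $\A$ and to $\B$, so that they share a common ``control'' skeleton determined only by the signature $\sigma$ and the arities of its relations; this common skeleton is what $\varphi$ will identify.

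Concretely, I would construct $S_1$ as a finite semigroup whose multiplication simulates, for each relation symbol $R\in\sigma$ of arity $r$, an automaton that reads an $r$-tuple of ``value'' elements one coordinate at a time and verifies membership in $R^\A$; the semigroup is taken to be the transformation semigroup generated by one \emph{value element} $\tau_a$ for each $a\in A$ together with a fixed set of \emph{control elements} per relation (a start element $s_R$, an accept element, and a zero). The value elements play the role of the CSP domain, while the control elements encode the frame of a constraint. The semigroup $S_2$ is produced from $\B$ by the identical recipe. Because the control elements depend only on $\sigma$ and not on the actual relations, there is a canonical bijection between the control part of $S_1$ and that of $S_2$; I let $\dom(\varphi)$ be the subsemigroup of $S_1$ generated by the control elements and let $\varphi$ be this bijection. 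One then checks that $\varphi$ is a semigroup homomorphism onto the control subsemigroup of $S_2$, and that the map sending each $\tau_a$ to $\tau_{h(a)}$ and each control element to its $\varphi$-image is a semigroup homomorphism $\psi:S_1\to S_2$ extending $\varphi$ (here the assumption that $h$ is a homomorphism $\A\to\B$ is exactly what makes the simulation commute), so that $\PLin(S_1,S_2,\varphi)$ is a well-defined template in the sense of \Cref{def:promise_eqns}.

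With the semigroups in place, I would give the reduction from $\pcsp(\A,\B)$ to $\PLin(S_1,S_2,\varphi)$ by translating an instance $\X$ into a system of equations: each variable of $\X$ becomes a variable ranging over the semigroup, and each constraint $R(v_1,\dots,v_r)$ is replaced by a block of ternary equations, chained through auxiliary variables and using the control elements as constants, asserting that the product $s_R X_{v_1}\cdots X_{v_r}$ reaches the accept element. The recipe is designed so that over $S_1$ this block is satisfiable precisely when the $X_{v_i}$ are value elements $\tau_{a_i}$ with $(a_1,\dots,a_r)\in R^\A$, whence the whole system is solvable over $S_1$ iff $\X\to\A$; and since the control constants and the accept element are transported by $\varphi$, a satisfying assignment over $S_2$ forces the values to witness $(b_1,\dots,b_r)\in R^\B$, so solvability over $S_2$ is equivalent to $\X\to\B$. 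Thus \textsc{Yes} and \textsc{No} instances are preserved, consistently with $\A'\to\B'$ via $\psi$. The converse reduction proceeds symmetrically, reading the value elements appearing in a satisfying assignment of a given system as an $\A$- (resp.\ $\B$-) homomorphism.

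The principal difficulty, as flagged in the introduction, is the absence of constants for value elements. In the non-promise reduction of \cite{KTT07:tcs} one may freely pin a variable to any element of the semigroup, which trivialises both the restriction of variables to the value set and the read-off of solutions. In the promise setting the only available constants lie in $\dom(\varphi)$, and using a constant $t$ on the $\A$-side rigidly forces the value $\varphi(t)$ on the $\B$-side; since $h$ (and hence $\psi$) need not be injective, placing value elements into $\dom(\varphi)$ would either over-constrain the $\B$-side or break the semigroup-homomorphism property of $\varphi$. The heart of the argument is therefore to engineer the control skeleton so that the domain-restriction and accept-detection gadgets can be realised using control constants \emph{alone}, while still guaranteeing the crucial ``No'' direction: whenever $\X\not\to\B$, no assignment over $S_2$ may satisfy the control equations without corresponding to a genuine homomorphism $\X\to\B$. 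Verifying that $\varphi$ is a bona fide partial semigroup homomorphism and that the gadgets leave no room for spurious $S_2$-solutions is where the care is required.
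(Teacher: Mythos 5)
Your proposal correctly identifies the overall strategy (mimic \cite{KTT07:tcs} and confront the lack of constants), but it does not actually contain the construction that makes the theorem true; the two places where you write that ``care is required'' are precisely the content of the proof, and they are left unresolved. Concretely: (1) Your claim that the control subsemigroups of $S_1$ and $S_2$ are canonically isomorphic is unverified and is not automatic for an automaton-style transformation semigroup --- the subsemigroup \emph{generated} by the control elements contains their products, and in a multiplication defined by simulating membership in $R^\A$ (resp.\ $R^\B$) those products can collapse differently in $S_1$ and in $S_2$, so $\varphi$ need not be a homomorphism at all. (2) You give no mechanism that forces variables of the equation system to take ``value'' elements over $S_2$; without value constants, an assignment could send variables to the zero, to accept elements, or to mixed products, satisfying the chained equations $s_R X_{v_1}\cdots X_{v_r}=\mathrm{accept}$ spuriously, which would destroy the \textsc{No} direction. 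The paper's solution to both problems is structural rather than automaton-based: first reduce to digraph templates (\Cref{th:pcsps_to_digraphs}), then \emph{plant} a fresh edge $(p,q)$ with singleton unary relations into both digraphs (\Cref{lem:extended_digraphs}), and build a right-normal band $S_D$ whose $\sim$-classes are copies of $D\cup\{p,q\}$ and of $E^D\cup\{(p,q)\}$. The constant subsemigroup $S_W$ consists only of the planted elements, so its multiplication table is literally the same for every digraph, making $\varphi=\id_{S_W}$ a genuine homomorphism; and membership of a variable in a sort $V^\square$ is expressible by equations such as $p^\square v^\square=v^\square$ and $v^\square p^\square=p^\square$ using planted constants only, which is what rules out spurious solutions.

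Separately, your one-sentence treatment of the converse reduction (``proceeds symmetrically'') misses a real obstacle: an instance of $\PLin(S_1,S_2,\varphi)$ is an \emph{arbitrary} system of equations with constants in $\dom(\varphi)$, not necessarily one produced by your forward translation, so there is nothing to ``read off''. The paper's \Cref{lem:eqns_to_digraphs} spends most of its effort here: it passes to the semilattice quotient $\widehat{S_W}\simeq\widehat{S_D}$, computes a minimal solution there (\Cref{lem:semilattice}), uses it together with \Cref{lem:semilatice_to_band} and right-normal band identities to pin every variable to a $\sim$-class, eliminates variables bound to $0$, renormalises the remaining equations, and only then recognises the system as $\Phi(\I)$ for some structure $\I$. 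Note that the quotient step works uniformly in $\D$ exactly because $\widehat{S_D}$ does not depend on $\D$; your construction has no analogue of this invariance, so even granting your gadgets, the backward direction would not go through as stated.
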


\mypar{Monoidal minions}
As our third result, we investigate minions based on monoids. For PCSPs whose polymorphism minions are homomorphically equivalent to such minions, we establish a dichotomy. 
This is a building block in the proof of our main result, but may be interesting in its own right. In this direction, we show that for each monoidal minion $\Mscr$, there are PCSP templates whose polymorphism minions are isomorphic to $\Mscr$. 
For a finite set $[n]$, a tuple $(a_i)_{i\in [n]} \in M^n$ is called commutative if each pair of its elements commute. 

\begin{definition}
\label{def:monoidal_minion}
Given an element $a\in M$ of a monoid $M$, the \emph{monoidal minion}
$\Mscr_{M,a}$ is the one where for each $n\in\NN$ the elements
$\bm{b} \in \Mscr_{M,a}(n)$
are commutative tuples $\bm{b}\in M^n$
with $\prod_{i\in [n]} b_i=a$, and where for each $m\geq 1$ and each $\pi: [n]\rightarrow [m]$ the minor $\bm{b}^{(\pi)}$ is the tuple $\bm{c}\in M^m$ given by
$c_j= \prod_{i\in \pi^{-1}(j)} b_i$, and the empty product equals the identity element $e$. 
\end{definition}

\begin{theorem}
\label{th:main}
    Let $M$ be a finite monoid and let $a\in M$. Consider a template $(\A, \B)$ with
    $\pol(\A,\B)$ homomorphically equivalent to $\Mscr_{M,a}$. Then
    $\pcsp(\A,\B)$ is solvable in polynomial time by $\BLP+\AIP$ if and only if $a$ is regular in $M$.
    If $a$ is not regular, then $\pcsp(\A,\B)$ is \NP-hard. 
\end{theorem}

Next, we show that there are templates whose polymorphism minions are of the considered type (up to isomorphism).

\begin{theorem}
\label{th:examples}
    Let $M$ be a monoid, and $a\in M$ an arbitrary element. Then the template $(\A, \B)$ described below satisfies that $\pol(\A, \B)\simeq \Mscr_{M,a}$.\footnote{We use $\simeq$ to denote the isomorphism relation, i.e., the existence of a bijection between the minions that preserves arities and minor operations.} 
    The signature $\sigma$ of $\A$ and $\B$ contains three relation symbols:  a ternary symbol $R$, and two unary ones $C_0,C_1$. 
    We define $A=\{0,1\}$, 
    $R^\A=\{ (1,0,0), (0,1,0), (0,0,1)\}$, $C_0^\A=\{0\}$ and $C_1^\A=\{1\}$. The universe $B$ of $\B$ is $\Mscr_{M,a}(2)$. We define $R^\B$ as the set of triples in $\left(\Mscr_{M,a}(2)\right)^3$
    of the form $((c_1, c_2c_3), (c_2,c_1c_3), (c_3, c_1c_2))$, where $c_1,c_2,c_3\in M$  commute pairwise, and $c_1c_2c_3=a$.
    Finally, the unary relations $C_0^\B$ and $C_1^\B$ are the singleton sets containing the tuples $(e,a)$ and $(a,e)$ respectively.\footnote{The map $f: A \rightarrow B$ given by $0 \mapsto (e,a)$ and $1 \mapsto (a,e)$ is a homomorphism from $\A$ to $\B$. The structure $\A$ corresponds to the ``$1$-in-$3$'' template, where both constants are added, and $\B$ is the so-called ``free structure''~\cite{BBKO21} of $\Mscr_{M,a}$ generated by $\A$.}
\end{theorem}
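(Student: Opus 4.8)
The plan is to exhibit an explicit minion isomorphism $\Phi\colon \Mscr_{M,a}\to\pol(\A,\B)$. (Equivalently, one checks that $\B$ is precisely the free structure $\mathbb{F}_{\Mscr_{M,a}}(\A)$ generated by $\A$ in the sense of~\cite{BBKO21} and invokes the general fact that $\pol(\A,\mathbb{F}_{\Mscr_{M,a}}(\A))\simeq\Mscr_{M,a}$; below I describe the direct verification.) For $g=(g_1,\dots,g_n)\in\Mscr_{M,a}(n)$ and $\tuple{x}\in A^n=\{0,1\}^n$ with support $S=\{i:x_i=1\}$, set $\Phi(g)=p_g$ where $p_g(\tuple{x})=\bigl(\prod_{i\in S}g_i,\ \prod_{i\notin S}g_i\bigr)$. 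This is well defined because the $g_i$ commute pairwise (so the products are order-independent) and $\prod_{i\in S}g_i\cdot\prod_{i\notin S}g_i=\prod_i g_i=a$, so $p_g(\tuple{x})\in\Mscr_{M,a}(2)=B$; it coincides with the minor $g^{(\tuple{x})}$ under the identification of $\tuple{x}$ with the map $[n]\to A$ sending $1$ to the first coordinate and $0$ to the second.

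First I would verify the two easy claims. That $p_g$ is a polymorphism amounts to: $p_g$ sends $\tuple{0}\mapsto(e,a)$ and $\tuple{1}\mapsto(a,e)$, matching $C_0^\B,C_1^\B$; and for every partition $[n]=S_1\sqcup S_2\sqcup S_3$ (the three column types of $R^\A$), writing $d_j=\prod_{i\in S_j}g_i$, the triple $(p_g(\mathbf{1}_{S_1}),p_g(\mathbf{1}_{S_2}),p_g(\mathbf{1}_{S_3}))$ equals $((d_1,d_2d_3),(d_2,d_1d_3),(d_3,d_1d_2))$ with $d_1,d_2,d_3$ commuting pairwise and $d_1d_2d_3=a$, hence lies in $R^\B$. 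That $\Phi$ preserves minors, i.e.\ $p_{g^{(\pi)}}=(p_g)^{(\pi)}$ for $\pi\colon[n]\to[m]$, is a one-line computation comparing $\prod_{j\in T}\prod_{i\in\pi^{-1}(j)}g_i$ with $\prod_{i:\pi(i)\in T}g_i$. Both are routine.

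The heart of the argument is that $\Phi$ is a bijection on each arity. Injectivity is immediate: evaluating $p_g$ at the indicator of a singleton $\{i\}$ recovers $g_i$ as the first coordinate, so $p_g=p_{g'}$ forces $g=g'$. For surjectivity, given an arbitrary $n$-ary polymorphism $p$ I write $p(S):=p(\mathbf{1}_S)=(\alpha(S),\beta(S))$ and set $g_i:=\alpha(\{i\})$; the task is to show $g:=(g_1,\dots,g_n)\in\Mscr_{M,a}(n)$ and $p=p_g$. Unwinding membership in $R^\B$, every partition $[n]=S_1\sqcup S_2\sqcup S_3$ yields that $\alpha(S_1),\alpha(S_2),\alpha(S_3)$ commute pairwise, that $\alpha(S_1)\alpha(S_2)\alpha(S_3)=a$, and that $\beta(S_1)=\alpha(S_2)\alpha(S_3)$ (and cyclically). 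Taking $S_1=\{i\}$, $S_2=\{j\}$, and $S_3$ the rest gives that the $g_i$ commute pairwise, while $\alpha(\emptyset)=e$ and $\alpha([n])=a$ come from the constant constraints.

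The step I expect to be the main obstacle is upgrading this into multiplicativity of $\alpha$, since $M$ is only a monoid and offers no cancellation, so one cannot cancel $\alpha(S_3)$ from identities like $\alpha(S_1)\alpha(S_2)\alpha(S_3)=a$. The trick is to fix $S_1=S$ and vary the split of its complement: the value $\beta(S)$ does not depend on the chosen partition, so $\alpha(S_2)\alpha(S_3)=\beta(S)$ for every decomposition $S^c=S_2\sqcup S_3$; choosing $S_3=\emptyset$ gives $\beta(S)=\alpha(S^c)$, and feeding this back yields $\alpha(S_2\sqcup S_3)=\alpha(S_2)\alpha(S_3)$ for all disjoint $S_2,S_3$. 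Thus $\alpha(S)=\prod_{i\in S}g_i$ by induction, whence $\prod_i g_i=\alpha([n])=a$ (so $g\in\Mscr_{M,a}(n)$) and $\beta(S)=\alpha(S^c)=\prod_{i\notin S}g_i$, giving $p(S)=p_g(\mathbf{1}_S)$ for all $S$, i.e.\ $p=p_g$. This completes bijectivity, and hence $\Phi$ is the desired isomorphism.
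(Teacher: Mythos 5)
Your proof is correct and takes essentially the same approach as the paper: the paper constructs the inverse map $\xi\colon \pol(\A,\B)\to\Mscr_{M,a}$ (evaluation at singleton indicator vectors) and shows it is a bijective minion homomorphism, using exactly your partition arguments --- pairwise commutativity from partitions $(\{i\},\{j\},\text{rest})$, the identity $\beta(S)=\alpha(S^c)$ from partitions $(S,S^c,\emptyset)$, and multiplicativity of $\alpha$ over disjoint unions --- as its Claims 1--3, with your map $\Phi$ appearing verbatim in its surjectivity step. The only difference is the (immaterial) direction in which the bijection is presented.
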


Finally, we remark that monoidal minions are natural objects of study, as they include other relevant previously studied minions. 

\begin{remark}\label{rem:minions}
Consider the Abelian monoid $M=\{0,1,\epsilon\}$, whose multiplicative identity is $0$, and where $1\cdot 1= 1\cdot \epsilon = \epsilon \cdot \epsilon = \epsilon$.
The elements of $\Mcal_{M,1}$ are tuples with all zero entries except for a single $1$ entry. Hence $\Mcal_{M,1}$, corresponds to the so-called trivial minion $\mathscr{T}$
consisting of all projections (also known as dictators) on a two-element set.
  This minion represents the hardness boundary for CSPs, in the sense that a CSP
  is \NP-hard if and only if its polymorphism minion maps homomorphically to $\mathscr{T}$~\cite{Bulatov05:classifying,Zhuk20:jacm}.
\par
Another example of a monoidal minion is the one capturing the power of arc consistency from~\cite{Dalmau24:lics}. In fact, every \emph{linear} minion (in the sense of~\cite{cz23soda:minions}) is a union of monoidal minions.\footnote{We thank Lorenzo Ciardo for this observation.}
\par
If we allow infinite monoids to be considered, then monoidal minions include important minions that capture solvability via relevant algorithms. Consider the monoid
$M=\{ (r,z)\in \mathbb{Q} \times \mathbb{Z} \mid r\in [0,1], \text{ and } r=0 \text{ implies } z=0 \}$, where the binary operation is given by coordinate-wise addition, and the identity is $(0,0)$. Then $\Mscr_{M,(1,1)}$ is precisely the minion $\Mscr_{\BLP+\AIP}$
described in~\cite{BGWZ20}, which expresses the power of $\BLP+\AIP$. Similarly, the minions described in~\cite{BBKO21} to capture the power of $\BLP$ and $\AIP$ are monoidal minions as well. 
\end{remark}

\section{Monoidal Minions: Proof of~\Cref{th:main}}
\label{sec:monoidal}

\textbf{Solvability by $\BLP+\AIP$}\quad 
We show both directions. First we prove that $a$ being regular implies that
$\pcsp(\A, \B)$ is solvable by $\BLP+ \AIP$. We use the characterisation of the
power of $\BLP+\AIP$ from~\Cref{thm:blpaip} for the tractability part
of~\Cref{th:main}. Observe that if there is a minion homomorphism $\xi:
\Mscr_{M,a} \rightarrow \pol(\A,\B)$ and $p\in \Mscr_{M,a}$ is a $(2i+1)$-ary
$2$-block-symmetric element, then so is $\xi(p)$. Hence, showing that
$\Mscr_{M,a}$ has $2$-block-symmetric elements of all odd arities proves that
$\pcsp(\A,\B)$ is solvable in polynomial time by $\BLP+\AIP$. 
By~\Cref{lem:regularity}\,(2), $a^j=a$ for some $j>1$. Let
$b=a^{j-2}$, where $a^0=e$. Then $a^2b=a$, and $ab=ba$. For each $i\geq 1$ consider
the $(2i+1)$-ary element of $\Mscr_{M,a}$ consisting of $i+1$ consecutive $a$'s
followed by $i$ consecutive $b$'s.
To see that this this is indeed an element of $\Mscr_{M,a}$, observe that $a$
and $b$ commute, and $a^{i+1}b^i=a$ follows from $a^2b=a$. This tuple is
$2$-block-symmetric, with the blocks corresponding to $a$'s and $b$'s (of sizes
$i+1$ and $i$, respectively). 
\par Now we prove that if $\pcsp(\A,\B)$ is
solvable by \BLP+\AIP, then $a$ must be regular. If $\pcsp(\A,\B)$, then
$\pol(\A,\B)$ has a $(2i+1)$-ary $2$-block symmetric polymorphism $p_i$ for each
$i\geq 1$. As $\pol(\A,\B)$ is homomorphically equivalent to $\Mscr_{M,a}$, we
conclude there must be a $(2i + 1)$-ary $2$-block symmetric element $\bm{c_i}$
for each $i\geq 1$. Observe that $\bm{c_i}$ is a tuple in $M^{2i+1}$ of the form
$(\alpha_i, \dots, \alpha_i, \beta_i,\dots, \beta_i)$, where the first $i+1$
elements are equal to some $\alpha_i\in M$, and the last $i$ elements are equal
to some $\beta_i\in M$. By the definition of $\Mscr_{M,a}$ it must hold that
$\alpha_i^{i+1}\beta_i^i=a$, and that $\alpha_i\beta_i=\beta_i\alpha_i$. As $M$
is finite, there must be a pair $(\alpha, \beta)\in M^2$ that appears infinitely
often in the sequence $(\alpha_i,\beta_i)_{i\geq 1}$. Then, there must be two
indices $j> 2i+1$, with $i\geq 1$, satisfying $(\alpha,\beta)=(\alpha_i,
\beta_i)=(\alpha_j, \beta_j)$. The following chain of identities holds \[ a =
\alpha^{i+1}\beta^i = \alpha^{j+1}\beta^j = \left(\alpha^{i+1}\beta^j\right)^2
\left(\alpha^{j -2i -1}\beta^{j-2i} \right)= a^2 \left(\alpha^{j -2i
-1}\beta^{j-2i} \right). \] This shows that $a$ is regular. 

\smallskip
\noindent\textbf{\NP-hardness}\quad 
We prove the intractability part of~\Cref{th:main} (as well as other hardness results later in this paper) using the following result.

\begin{theorem}[\cite{BBKO21}]
    \label{th:hard_aux_general}
    Let $\Mscr= \pol(\A,\B)$, and let $K,L\geq 1$ be any fixed integers. Suppose that $\Mscr$ satisfies the following conditions:
    \begin{enumerate}
     \item $\Mscr= \bigcup_{\ell\in [L]} \Mscr_\ell$;
        
     \item for each $\ell\in [L]$, 
    there is a map $p \mapsto \mathcal{I}_\ell(p)$ that sends each $p\in \Mscr_\ell$ to a set of its coordinates $\mathcal{I}_\ell(p)$ of size at most $K$;
    
    \item for each $\ell\in [L]$ and  for each minor $p^{(\pi)}=q$, where $p,q\in \Mscr_\ell$, 
 $\pi(\mathcal{I}_\ell(p))\bigcap \mathcal{I}_\ell(q) \neq \emptyset$.
    \end{enumerate}
    Then $\pcsp(\A, \B)$ is \NP-complete. 
\end{theorem}

Given a template $(\A,\B)$, if there is a minion homomorphism
$\xi:\pol(\A,\B) \rightarrow \Mscr_{M,a}$ and $\Mscr_{M,a}$ satisfies the
conditions in~\Cref{th:hard_aux_general}, so does $\pol(\A,\B)$. Indeed, if $\Mscr_{M,a}=\bigcup_{\ell\in [L]} \Mscr_\ell$, then we can write $\pol(\A,\B)= \bigcup_{\ell \in [L]} \xi^{-1}(\Mscr_\ell)$.
Additionally, if the map $\mathcal{I}_\ell$ witnesses the condition in the theorem for
$\Mscr_\ell$, then the map $\mathcal{I}^\prime_\ell$ given by $p\mapsto \mathcal{I}_\ell(\xi(p))$ witnesses the same condition for $\xi^{-1}(\Mscr_\ell)$. Hence,
we show the hardness part of~\Cref{th:main} by proving that $\Mscr_{M,a}$ satisfies the assumptions in~\Cref{th:hard_aux_general} when $a\in M$ is not regular. \par
 
For a monoid $M$, we
we write $a \lneqsg b$ when $a\leqsg b$ holds but $b\leqsg a$ does not. We  use the following simple observation.

\begin{observation}
\label{obs:monoid}
Let $M$ be a monoid and $a,b,c\in M$ three elements that commute pairwise. Suppose that $abc \lneqsg ab$. Then $ac \lneqsg a$. 
\end{observation} 
\begin{proof}
    We prove the contrapositive. Suppose that $a\leqsg ac$. That is, there is some $d\in M$ that satisfies $acd=a$. 
    We have
    $(abc)d = (bac)d=b(acd)=ba=ab$, proving that $ ab \leqsg abc$.   
\end{proof}

    Assume that $a$ is not regular. That is, that $a^2 b \neq a$ for every $b\in M$. 
    Let $\bm{b}\in \Mscr_{M,a}(n)$ for some number $n\geq 1$. Using the fact that the elements $b_i$
    commute pairwise one can deduce that $\prod_{i\in I} b_i \leqsg \prod_{j\in J} b_j$ for all $J\subseteq I \subseteq [n]$. A coordinate $j\in
    [n]$ is called \emph{relevant} in $\bm{b}$ if $a
    \lneqsg \prod_{i\in [n]\setminus \{j\}} b_i$. 
     Consider the map $\mathcal{I}$ that assigns to each $\bm{b}\in \Mscr_{M,a}$ its set of relevant coordinates. Claims 1 through 3 proved below establish the required assumptions in~\Cref{th:hard_aux_general} with $L=1$ and $K=|M|$,
     thus showing \NP-hardness of
     $\pcsp(\A, \B)$. Throughout the proof we adopt the convention that empty products over a monoid equal the identity element. 
 \begin{description}
    \item[Claim 1: $\bm{b}$ has at most $|M|$ relevant coordinates.] Let $\{ i_1,\dots, i_h \} \subseteq [n]$ be the set of relevant coordinates of $\bm{b}$. Given $k\in [h]$ we define
    \[
    c_k = \prod_{j\in [k-1]} b_{i_j}, \quad \text{and} \quad
    d_k = \prod_{j\in [n] \setminus \{ i_1, \dots, i_k \}} b_j.
    \]
    The following hold: (1) $a= d_k c_k b_{i_k}$, (2) $b_{i_k}, c_k$ and $d_k$ commute pairwise, and (3) as $i_k$ is a relevant coordinate, it holds that $d_k c_k b_{i_k} \lneqsg d_k c_k$. Applying~\Cref{obs:monoid}, we obtain that
    $c_k b_{i_k} \lneqsg c_k$. Expanding the definition of $c_k$ this means that
    \[
    \prod_{j\in [k]} b_{i_j} 
    \lneqsg
    \prod_{j\in [k-1]} b_{i_j}.
    \]
    This holds for all $k\in [h]$,
    so in particular the products
    $\prod_{j\in [k]} b_{i_j}$ must be pairwise different and the number $h$ of relevant coordinates is at most $|M|$, proving the claim. 
  ~\\

   \item[Claim 2: Minors preserve relevant coordinates.] Let $\bm{c}=\bm{b}^{(\pi)}$, where $\pi:[n]\rightarrow [m]$ is a map and let $i\in [n]$ be a relevant coordinate of $\bm{b}$. We want to show that $j=\pi(i)$
    is a relevant coordinate of $\bm{c}$. Indeed, if that were not the case,
    using the equality
    $ \prod_{k\in [n]\setminus \pi^{-1}(j)} b_k
    = 
    \prod_{\ell \in [m]\setminus \{j\}} c_\ell
    $,
    we would have that
    \[
    \prod_{k\in [n]\setminus \pi^{-1}(j)} b_k \leqsg a.
    \]
     Using this together with the fact that
    $\prod_{k\in [n]\setminus \{i\}} b_k \leqsg \prod_{k\in [n]\setminus
     \pi^{-1}(j)} b_k$, where $i \in \pi^{-1}(j)$, 
     shows that
    \[
    \prod_{k\in [n]\setminus \{i\} } b_k \leqsg a,
    \]
    thus contradicting the fact that $i$ was a relevant coordinate of $\bm{b}$.
    ~\\
     \item[Claim 3: $\bm{b}$ has at least one relevant coordinate.] Suppose otherwise for the sake of contradiction. Then 
    for each $i\in [n]$ there is an element $c_i\in M$ such that $a c_i = \prod_{i\in [n] \setminus \{j\}} b_i $. Let $c=\prod_{i\in [n]} c_i$. One can check that that $a^2c=a$, contradicting our assumption that $a$ was not regular. Indeed, 
    \begin{align*}
        a^2 c & =  
        \left( \prod_{i = 1}^n b_i 
        \right) 
        \left( a c_1 \right)
        \left( \prod_{i=2}^n c_i \right) =  
        \left( \prod_{i= 1}^n b_i 
        \right) 
        \left( \prod_{i\in [n] \setminus \{1\} } b_i  \right)
        \left( \prod_{i=2}^n c_i \right)
        \\ 
         & =   
         \left( \prod_{i = 2}^n b_i 
        \right) 
        \left( a c_2 \right)
        \left( \prod_{i=3}^n c_i \right) =  
        \left( \prod_{i= 2}^n b_i 
        \right) 
        \left( \prod_{i \in [n] \setminus \{2\} } b_i  \right)
        \left( \prod_{i=3}^n c_i \right) \\ 
         & =   
         \left( \prod_{i = 3}^n b_i 
        \right) 
        \left( a c_3 \right)
        \left( \prod_{i=4}^n c_i \right) =  
        \left( \prod_{i= 3}^n b_i 
        \right) 
        \left( \prod_{i \in [n] \setminus \{3\} } b_i  \right)
        \left( \prod_{i=4}^n c_i \right) \\
         & = \dots = \left( \prod_{i= n}^n b_i 
        \right) 
        \left( \prod_{i \in [n] \setminus \{ n \} } b_i  \right) = a.
    \end{align*}
    Here we have repeatedly used the fact that the elements $b_i$ commute pairwise and in particular they commute with $a=\prod_{i\in [n]} b_i$.
\end{description}

\section{Equations Over Monoids and Groups: Proofs of~\Cref{th:promisemonoids} and~\Cref{th:promisegroups}}
\label{sec:equations}

We begin with a 
simple characterisation of the polymorphisms of promise equation templates.

\begin{lemma}\label{lem:polpromise}
    Consider a template $\PLin(Z_1, Z_2, \varphi)$ of promise equations over semigroups, monoids, or groups, respectively.
    A map $p: Z_1^n \rightarrow Z_2$ is a polymorphism of $\PLin(Z_1,Z_2,\varphi)$ if and only if $p$ is a semigroup, monoid, or group homomorphism, respectively, and $p(s,s,\dots, s)= \varphi(s)$
    for all $s\in \dom(\varphi)$.
\end{lemma}

\begin{proof}
We show the statement for the semigroup, monoid and the group case. The semigroup case is straightforward: $p$ is a polymorphism if and only if it preserves $R_\times$ and $R_{s}$ for all $s\in \dom(\varphi)$. Preserving $R_\times$ is equivalent to preserving the product operation from $Z_1^n$ to $Z_2$, and preserving $R_{s}$ means that $p(s,\dots, s)= \varphi(s)$. \par
The monoid case follows in the same way. Using the same reasoning we obtain that $p$
preserves the product operation from $Z_1^n$ to $Z_2$, and that 
$p(s,\dots,s)=\varphi(s)$ for all $s\in \dom(\varphi)$.
The only additional requirement is that $p(e_{Z_1},\dots, e_{Z_1}) = e_{Z_2}$. This follows from the facts that $e_{Z_1}\in \dom(\varphi)$, and $\varphi$ is a monoid homomorphism, so it must preserve identity elements. This means that $p(e_{Z_1},\dots, e_{Z_1})=
\varphi(e_{Z_1})= e_{Z_2}$.
\par
Finally, the group case is shown as the monoid case using that preserving inverse elements is just a consequence of preserving the product operation and preserving identity elements. 
\end{proof}

Let us discuss some key properties of polymorphisms that will be used in the proof of~\Cref{th:main}. Given an $n$-ary polymorphism $p$ of $\PLin(M_1,M_2, \varphi)$, we define $\mathcal{N}(p)$
as the submonoid $\{ p(s,\dots, s) \mid s\in M_1\}\substr M_2$. Given $i\in [n]$, we also define the submonoid $\mathcal{N}(p,i)\substr M_2$ as  
\[
    \{ p(s_1, \dots, s_n) \mid s_i\in M_1, \text{ and } 
    s_j=e \text{ when $j\neq i$} \}.
\]
We give some facts about these submonoids that follow directly from the definitions.
\begin{observation}
\label{obs:submonoid}
    Let $M_1,M_2$ be monoids and $\varphi$ a monoid homomorphism with $\dom(\varphi)\substr M_1, \im(\varphi) \substr M_2$. 
    Let $p$ be a $n$-ary polymorphism of $\PLin(M_1, M_2,\varphi)$. Then the following statements hold:
    \begin{enumerate}
    \item \label{it:submonoids1}
    The map
    $\phi: \prod_{i\in [n]} \mathcal{N}(p,i) \rightarrow M_2$ given by 
    $(s_1,\dots s_n)\mapsto \prod_{i\in [n]} s_i$ is a monoid homomorphism. In particular, 
    given $1\leq i < j \leq n$, 
    any two elements 
    $t_1 \in \mathcal{N}(p,i)$, $t_2\in \mathcal{N}(p,j)$ commute. \par
    \item \label{it:submonoids2} If $\mathcal{N}(p,i)= \mathcal{N}(p,j)$ for some $i\neq j \in [n]$ then $\mathcal{N}(p,i)$ is Abelian. \par
    \item \label{it:submonoids3} The submonoid $\mathcal{N}(p)$ is contained in $\im(\phi)$, where $\phi$ is as defined in~\Cref{it:submonoids1}. In particular, if $\mathcal{N}(p)$ is not Abelian, some $\mathcal{N}(p,i)$ must be non-Abelian. 
    \end{enumerate} 
\end{observation}

We are ready to prove our main result.

\promisemonoids*

\begin{proof}
    First we show that the existence of such homomorphism $\psi$ is equivalent
    to solvability by $\BLP+\AIP$. 
    We prove both implications. Suppose that such homomorphism $\psi$ exists. As $\im(\psi)$
    is a union of subgroups, by~\Cref{lem:regularity} there is some number $k>1$ such that
    $s^k=s$ for all $s\in \im(\psi)$. Let $n\geq 1$ be arbitrary. Consider the map $p: M_1^{2n+1} \rightarrow M_2$ given by 
    \[ (s_i)_{i\in [2n+1]}  \mapsto \left(\prod_{i\in [n+1]} \psi(s_i)\right)
    \left(
    \prod_{i\in [n]}  \psi(s_{i + n+ 1})^{k-2}
    \right),
    \]
    where the convention is that the zero-th power of an element equals the identity of the monoid. 
    We claim that $p$ is a $2$-block-symmetric polymorphism of $\PLin(M_1,M_2,\varphi)$
    with the first block consisting of the first $n+1$ coordinates, and the second block consisting of the rest. The fact that $p$ is a $2$-block-symmetric map with the blocks as claimed follows from the fact that $\psi$ is Abelian. To complete the argument, we show that $p$ is a polymorphism of $\PLin(M_1,M_2,\varphi)$ using the characterisation from~\Cref{lem:polpromise}. First, observe that the fact that $\psi$ is Abelian implies that $p$ is a monoid homomorphism. Indeed, 
    \begin{align*}
    p(s_1,\dots, s_{2n+1})& p(t_1,\dots,t_{2n+1})   \\ & =
   \left(\prod_{i\in [n+1]} \psi(s_i) \psi(t_i) \right)
    \left(
    \prod_{i\in [n]}  \psi(s_{i + n+ 1})^{k-1} \psi(t_{i + n+ 1})^{k-1}
    \right) \\ &  =  p(s_1t_1,\dots, s_{2n+1}t_{2n+1}),      
    \end{align*}
    so $p$ preserves products. Now we only need to prove that $p(s,\dots,s)=\varphi(s)$ for all $s\in \dom(\varphi)$ in order to show that $p$ is a polymorphism. To see that this holds, observe that 
    \[
    p(s,\dots,s) = \psi(s)^{n(k-1) + 1} = \psi(s) = \varphi(s), 
    \]
    where the last equality uses the fact that $\psi$ extends $\varphi$. This completes the proof of the first implication via~\Cref{thm:blpaip}. 
    \par
    In the other direction, suppose that $\PLin(M_1,M_2, \varphi)$ is solvable by $\BLP+\AIP$. That is, by~\Cref{thm:blpaip}, there is a $2$-block-symmetric polymorphisms $p_i$ of $\PLin(M_1,M_2, \phi)$ of arity $2i+1$ for each $i\geq 1$. For each $i$, we define three homomorphisms $\alpha_i,\beta_i, \gamma_i$ from $M_1$ to $M_2$. Given $a\in M_1$, we define 
    $\alpha_i(a)$ as the element $p_i(\bm{a})$, where $a_1=a$ and $a_j=e$ for all $j\neq 1$. Similarly, $\beta_i(a)$ is the element $p_i(\bm{a}^\prime)$, where $a_{i+2}^\prime=a$, and $a_{j}^\prime=e$ for all $j\neq i+2$. This way, given an arbitrary $\bm{b}\in M_1^{2i+1}$, it holds that
    \[
    p_i(\bm{b})=
    \left(\prod_{j=1}^{i+1} \alpha_i(b_j) 
    \right)
    \left( 
    \prod_{j=i+2}^{2i+1}
    \beta_i(b_j)
    \right).
    \] 
    Finally,
    given $a\in M_1$, the element $\gamma_i(a)$
    equals $p_i(a,a,\dots,a)$, so, $\gamma_i(a)=\alpha_i(a)^{i+1}\beta_i(a)^i$.    
    As the number of possible triples $(\alpha_i, \beta_i,\gamma_i)$ 
    is finite, there is a choice $(\alpha,\beta, \gamma)$ that appears infinitely often in the family $(\alpha_i, \beta_i, 
    \gamma_i)_{i=0}^\infty$.
    Let $i,j$ be such that 
    $(\alpha,\beta,\gamma)=(\alpha_i, \beta_i,\gamma_i)= (\alpha_j, \beta_j,\gamma_j)$
    and $j\geq  2i+1$. We claim that $\gamma$ has all the properties of the map $\Psi$ in the statement. We need to check that (I) $\gamma$ extends $\varphi$, (II) $\gamma$ has Abelian image, and (III) $\im(\gamma)$ is a union of subgroups. Property (I) follows from the fact that $p_i$ is a polymorphism. 
    To show property (II), we first need to make some observations about $\im(\alpha)$ and $\im(\beta)$. By definition, $\im(\alpha)=\mathcal{N}(p_j,1)$ and $\im(\beta)=\mathcal{N}(p_j,j+2)$.
    By~\Cref{it:submonoids1} in~\Cref{obs:submonoid}, this implies that $ab=ba$ for all $a\in \im(\alpha), b\in \im(\beta)$. 
    Using the fact that $p_j$ is $2$-block symmetric and $j\geq 2$, we can deduce that $\mathcal{N}(p_j,1)= \mathcal{N}(p_j,2)$
    and $\mathcal{N}(p_j,j+2)=\mathcal{N}(p_j,j+3)$.  By~\Cref{it:submonoids2} in~\Cref{obs:submonoid}, this implies that both $\im(\alpha)$ and $\im(\beta)$ are Abelian monoids. Having shown these properties of $\im(\alpha)$ and $\im(\beta)$ we are ready to show (II). We need to prove that $\gamma(a)\gamma(b)= \gamma(b)\gamma(a)$ for all $a, b\in M_1$. Indeed,
    \[
    \gamma(a)\gamma(b)= 
\left(    \alpha(a^{j+1})\beta(a^j)\right)
\left(
    \alpha(b^{j+1})\beta(b^j) \right) = 
    \gamma(b)\gamma(a),    
    \]
where the second equality uses the fact that all terms in the second expression commute due to our observations about $\im(\alpha)$ and $\im(\beta)$.
Finally, let us show that (III) holds. By~\Cref{lem:regularity}, we just need to show that for each $a\in M_1$ there is some $b\in M_2$ satisfying $\gamma(a)^2b= \gamma(a)$.
By our choice of $i$ and $j$, for all $a\in M_1$ it holds that    
\begin{align*}
 \gamma(a) & = 
\alpha(a^{i+1})\beta(a^i)=
\alpha(a^{j+1})\beta(a^j) \\ & = 
\alpha(a^{i+1})^2\alpha(a^{j - 2i - 1}) \beta(a^i)^2 \beta(a^{j-2i}) =
\gamma^2(a) \left(\alpha^{j - 2i -1}(a)\beta^{j - 2i}(a)\right),
\end{align*}
where the fourth equality uses that all the terms in the fourth expression commute by our observations about $\im(\alpha)$ and $\im(\beta)$. \par

Finally, let us prove the second part of the theorem. We show that $\PLin(M_1,M_2,\varphi)$ is \NP-hard assuming there is no Abelian homomorphism $\psi: M_1\rightarrow M_2$ extending $\varphi$ whose image is a union of subgroups. Let $\Mscr$ be the polymorphism minion of $\PLin(M_1,M_2,\varphi)$. Given a polymorphism $p\in \Mscr$, 
    we define $\mathcal{N}(p)$ as
    the submonoid $\{ p(s,\dots, s) \mid s\in M_1\}\substr M_2$. Observe that by assumption, for a given polymorphism $p$ it holds that  the monoid $\mathcal{N}(p)$ is non-Abelian or that $\mathcal{N}(p)$ is not a union of subgroups.  Define $\Omega$ 
    as the set of monoid homomorphisms $\psi: M_1 \rightarrow M_2$ for which $\im(\psi)$ is not a union of subgroups. By~\Cref{lem:regularity}, this happens precisely when $\im(\psi)$ contains some non-regular element $a\in M_2$.    
    Let $L=|\Omega| + 1$, and let 
    $K= \max( |M_2|, |\{ N\substr M_2 \mid N \text{ is non-Abelian } \}|) $.
    We use~\Cref{th:hard_aux_general} with the constants $L,K$ to show \NP-hardness. We define the following subminions of $\Mscr$. 
    \[
    \Mscr_{\Ab} = \{ 
    p \in \Mscr, \mid \mathcal{N}(p) \text{ is not Abelian}
    \},
    \]
    and given any monoid homomorphism $\psi \in \Omega$
     we set
    \[
    \Mscr_\psi = \{ 
    p \in \Mscr, \mid p(s,\dots, s) = \psi(s) \text{ for all } s\in M_1
    \}.
    \]
    By the previous observation it holds that
    \[
    \Mscr= \Mscr_{\Ab} \bigcup_{\psi \in \Omega} \Mscr_\psi.\]
    
    We give selection functions $\mathcal{I}$ for each of these sub-minions satisfying the assumptions of~\Cref{th:hard_aux_general}. Let $p$ be any $n$-ary polymorphism in $\Mscr_{\Ab}$. Given 
    $i\in [n]$ we define $\mathcal{N}(p,i)\substr M_2$ 
    as the submonoid 
    \[
    \{ p(s_1, \dots, s_n) \mid s_i\in M_1, \text{ and } 
    s_j=e \text{ when $j\neq i$} \}.
    \]
    We give some facts about these submonoids. \par
    
    \par

    Given an $n$-ary polymorphism $p\in \Mscr_{\Ab}$, we define $\mathcal{I}_{\Ab}(p)\subseteq [n]$ as the set of coordinates $i$ for which $\mathcal{N}(p,i)$ is non-Abelian.
    We claim that $\mathcal{I}_{\Ab}$ satisfies the assumptions of~\Cref{th:hard_aux_general}. Indeed, given some $n$-ary $p$:
    \begin{itemize}
        \item[\textbullet] $\mathcal{I}_{\Ab}(p)$ is non empty by~\Cref{it:submonoids3} in~\Cref{obs:submonoid}.
        \item[\textbullet] $|\mathcal{I}_{\Ab}(p)|\leq K$. Otherwise it would be that $\mathcal{N}(p,i)=\mathcal{N}(p,j)$ for some different $i,j\in \mathcal{I}_{\Ab}(p)$, contradicting the fact that $\mathcal{N}(p,i)$ is non-Abelian (by~\Cref{it:submonoids2} in~\Cref{obs:submonoid}).
        \item[\textbullet] Suppose that $p=q^{(\pi)}$ for some $m$-ary $q$
        and some $\pi:[m]\rightarrow [n]$. Let $i\in \mathcal{I}_{\Ab}(p)$, then 
        \[
        \mathcal{N}(p,i)\subseteq 
        \left\{ \prod_{j\in \pi^{-1}(i)} s_j \mid s_j\in \mathcal{N}(s,j) \text{ for all $j\in \pi^{-1}(i)$}
        \right\}.
        \]
        As $\mathcal{N}(p,i)$ is non-Abelian, some submonoid $\mathcal{N}(q,j)$ with $j\in \pi^{-1}(i)$ must be non-Abelian as well. This means that $\mathcal{I}_{\Ab}(p)\subseteq \pi(\mathcal{I}_{\Ab}(q))$.
    \end{itemize}
    
    Now consider an arbitrary homomorphism $\psi\in \Omega$ for which $\Mscr_\psi$ is non-empty. We define a selection function $\mathcal{I}_\psi$ satisfying the assumptions of~\Cref{th:hard_aux_general}. 
    Let $t\in \im(\psi)$ be a non-regular element, and let $s\in M_1$ be such that
    $\psi(s)=t$. Let $\Mscr_{M_2,t}$ be the monoidal minion defined in~\Cref{def:monoidal_minion}. Consider the map $\xi: \Mscr_\psi \rightarrow \Mscr_{M_2,t}$ that sends any $n$-ary polymorphism $p\in \Mscr_\psi$ to the tuple $(r_1,\dots, r_n)\in \Mscr_{M_2,t}(n)$ where for each $i\in [n]$
    \[
    r_i = p(s_1,\dots, s_n), \quad \text{ where } 
    s_i = s, \text{ and }
    s_j=e \text{ for all $j\neq i$ }.
    \]
    Observe that this is a well-defined minion homomorphism from $\Mscr_\psi$ to
    $\Mscr_{M_2,t}$. Indeed, first note that $(r_1,\dots, r_n)$ belongs to the
    second minion. This holds because $r_1r_2\dots r_n = p(s,\dots, s)=
    \psi(s)=t$, and, for each $i\in [n]$, the element $r_i$ belongs to $\mathcal{N}(p,i)$, so the $r_i$'s commute pairwise by~\Cref{it:submonoids1} in~\Cref{obs:submonoid}. One can also check that $\xi$ preserves minors.\par
    From the proof of~\Cref{th:main} 
    there is some selection function $\mathcal{I}$ on $\Mscr_{M_2,t}$
    satisfying the hypotheses of~\Cref{th:hard_aux_general} for some constant $K^\prime = |M_2| \leq K$ and $L=1$.
    Thus, we can define $\mathcal{I}_\psi$ on $\Mscr_\psi$ simply by setting $\mathcal{I}_\psi(p)= \mathcal{I}(\xi(p))$ for each polymorphism $p\in \Mscr_\psi$. \par
    Hence, have defined selection functions $\mathcal{I}_{\Ab}$ and $\mathcal{I}_\psi$ for each $\psi\in \Omega$ that satisfy the requirements of~\Cref{th:hard_aux_general}, showing that $\PLin(M_1, M_2, \varphi)$ is \NP-hard.
\end{proof}

\promisegropus*

\begin{proof}
    We prove both directions. 
    The hardness case follows from~\Cref{th:promisemonoids}. Indeed, $\PLin(G_1,G_2,\varphi)$
    is a template of promise equations over monoids (where the monoids just
    happen to be groups). Suppose that there is no Abelian group homomorphism
    $\psi: G_1 \rightarrow G_2$ that extends $\varphi$. Observe that a monoid
    homomorphism between two groups must also be a group homomorphism, so there
    is no Abelian monoid homomorphism $\psi: G_1 \rightarrow G_2$ that extends
    $\varphi$. Thus, by~\Cref{th:promisemonoids}, $\PLin(G_1,G_2,\varphi)$ is \NP-hard. \par

    In the other direction, suppose that such a $\psi$ exists. We show that
    $\PLin(G_1,G_2,\varphi)$ is solved by AIP using~\Cref{thm:aip}. Let $n$ be any odd arity and let $p: G_1^n \rightarrow G_2$ be the map given by
    $p(g_1,\dots, g_n) \mapsto \prod_{i\in [n]} t_i$, where $t_i=\psi(g_i)$ for every odd $i$, and $t_i=\psi(g_i)^{-1}$ for every even $i$. Then $p$ is an  alternating polymorphism of $\PLin(G_1,G_2,\varphi)$. 
\end{proof}

\section{Equations over Semigroups: Proof of~\Cref{th:semigroups}}
\label{sec:semigroups}

A \emph{digraph} $\D$ is a relational structure whose signature consists of a single binary relation symbol $E$. 

We follow closely the ideas from~\cite[Theorem 7]{KTT07:tcs}. That result states that every CSP is polynomial-time equivalent to a problem of the form $\Lin(S,S)$ for some semigroup $S$. Their proof uses the fact that every CSP is polynomial-time equivalent to another CSP whose template is a digraph $\D$ with all singleton unary relations~\cite{Feder98:monotone}. The fact that they consider these unary relations on $\D$ yields equations in $\Lin(S,S)$ where all constants are allowed. For PCSPs, however, this is our starting point.

\begin{theorem}[\cite{BG21:sicomp}]
\label{th:pcsps_to_digraphs}
For every  template $(\A_1, \A_2)$ there is a template $(\D_1, \D_2)$ of digraphs such that $\pcsp(\A_1, \A_2)$ is polynomial-time equivalent to $\pcsp(\D_1, \D_2)$.    
\end{theorem}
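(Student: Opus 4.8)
The plan is to mirror the classical Feder--Vardi reduction of CSPs to digraph CSPs, made \emph{functorial} (in the style of the refinement by Bul\'{\i}n, Deli\'{c}, Jackson, and Niven), and to observe that functoriality is precisely what makes the construction compatible with a promise pair.

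I would define a single polynomial-time gadget replacement $\mathcal{D}$ sending any $\sigma$-structure $\X$ to a digraph $\mathcal{D}(\X)$, and then set $\D_i=\mathcal{D}(\A_i)$. Concretely, $\mathcal{D}(\X)$ keeps one vertex for each element of $\X$; and for every relation symbol $R$, every tuple $\tuple{t}\in R^\X$, and every coordinate $j$, it attaches a directed path of length $\ell(R,j)$ joining a fresh ``body'' vertex $v_{\tuple{t}}$ to the vertex $t_j$, where the lengths $\ell(R,j)$ are chosen pairwise distinct across all $(R,j)$ and all interior path vertices are fresh. The three properties I would establish are: (i) $\mathcal{D}$ is polynomial-time computable; (ii) $\mathcal{D}$ is a functor, so every homomorphism $\X\to\Y$ induces $\mathcal{D}(\X)\to\mathcal{D}(\Y)$; in particular $\A_1\to\A_2$ lifts to $\D_1\to\D_2$, making $(\D_1,\D_2)$ a genuine template; and (iii) $\mathcal{D}$ also \emph{reflects} homomorphisms, i.e.\ $\mathcal{D}(\X)\to\mathcal{D}(\Y)$ implies $\X\to\Y$ for all $\X,\Y$.

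Granting (i)--(iii), the forward reduction $\pcsp(\A_1,\A_2)\leq_P\pcsp(\D_1,\D_2)$ maps an instance $\X$ (promised $\X\to\A_1$) to $\mathcal{D}(\X)$: by (ii) the promise survives as $\mathcal{D}(\X)\to\D_1$, and by (iii) any solution $\mathcal{D}(\X)\to\D_2=\mathcal{D}(\A_2)$ yields a solution $\X\to\A_2$, which moreover can be read off effectively. For the reverse reduction I would use that $(\A_1,\A_2)$ is recovered from $(\D_1,\D_2)$ by a fixed primitive positive construction (pp-construction): the original vertices and each original relation $R^{\A_i}$ are pp-definable over $\D_i$, with original vertices distinguished by the incident path-gadget pattern and $\tuple{t}\in R^{\A_i}$ holding exactly when the body-and-paths configuration of lengths $\ell(R,\cdot)$ occurs. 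Since the very same pp-formulas describe both $\D_1$ and $\D_2$, this is a pp-construction of the pair $(\A_1,\A_2)$ from the pair $(\D_1,\D_2)$, and hence yields $\pcsp(\D_1,\D_2)\leq_P\pcsp(\A_1,\A_2)$ by the standard gadget-replacement reduction attached to a pp-construction. Combining the two reductions gives the claimed polynomial-time equivalence.

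The crux is property (iii), the reflection of homomorphisms. One must choose the lengths $\ell(R,j)$ and the orientation of the gadgets rigidly enough that every digraph homomorphism $\mathcal{D}(\X)\to\mathcal{D}(\Y)$ is forced to send body vertices to body vertices and original vertices to original vertices while preserving the coordinate decoding read off from the path lengths; only then does it descend to a genuine homomorphism $\X\to\Y$. This rigidity analysis is the technical heart of the classical construction. The one promise-specific remark needed on top of it is that the gadget discipline reflects homomorphisms uniformly for \emph{every} target, so the identical construction applies verbatim to the pair $(\A_1,\A_2)$, with the lifted map $\D_1\to\D_2$ ensuring that $(\D_1,\D_2)$ is well defined.
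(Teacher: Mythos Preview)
The paper does not give a proof of this theorem; it is quoted from~\cite{BG21:sicomp} and used as a black box in the proof of \Cref{th:semigroups}. So there is no in-paper argument to compare your attempt against.

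That said, your plan matches the strategy of the cited reference: take a single gadget replacement $\mathcal{D}$ that both preserves and reflects homomorphisms, apply it to each component of the template, and note that functoriality makes $(\D_1,\D_2)$ a valid template while reflection gives the forward reduction and a uniform pp-construction gives the reverse. One caution on your concrete instantiation: attaching a plain forward-directed path of a distinct length per coordinate is not what the classical construction does and is not by itself rigid enough to force~(iii); the Feder--Vardi and Bul\'{\i}n--Deli\'{c}--Jackson--Niven gadgets are carefully shaped oriented zigzag paths that make the resulting digraph balanced, so that a height function stratifies body vertices, interior vertices, and original vertices and prevents collapses between them. You correctly flag this rigidity analysis as the crux, so this is a refinement of the specific gadget rather than a gap in the overall plan.
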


The fact that we lack singleton unary relations in the templates $(\D_1,\D_2)$ is the main obstacle for applying the techniques from~\cite{KTT07:tcs}. We overcome this by extending our digraphs with an additional edge joining two fresh distinguished vertices. 
The relational signature $\sigma^+$ contains one binary relation symbol $E$, and two unary relation symbols $P, Q$. Given a digraph $\D$, we write $\D^+$ for the $\sigma^+$ structure 
defined by $D^+ = D \cup \{p,q\}$, where $p$ and $q$ are fresh vertices, 
$E^{\D^+}=E^{\D} \cup \{(p,q)\}$, $P^{\D^+}=\{p\}$, and $Q^{\D^+}=\{q\}$.

\begin{lemma}
\label{lem:extended_digraphs}
Let $(\D_1, \D_2)$ be a  template of digraphs. Then $\pcsp(\D_1, \D_2)$ is polynomial-time Turing-equivalent to $\pcsp(\D_1^+, \D_2^+)$.    
\end{lemma}
\begin{proof}
    We give polynomial-time Turing reductions in both directions. First, we reduce from 
    $\pcsp(\D_1, \D_2)$ to $\pcsp(\D_1^+, \D_2^+)$. 
    We consider two cases. Suppose that $E^{\D_2}$ is empty. Then $\pcsp(\D_1, \D_2)$ amounts to deciding whether a given instance $\I$ has an edge or not, which takes polynomial time.
    Otherwise, assume that $E^{\D_2}$ is non-empty. Then our reduction takes any instance $\I$ of $\pcsp(\D_1, \D_2)$ and considers it as an instance of $\pcsp(\D_1^+, \D^+_2)$ where the unary relations are empty. Clearly, if $\I$ maps homomorphically to $\D_1$ then it also maps homomorphically to $\D_1^+$ using the same homomorphism. Otherwise, if $\I$ does not map homomorphically to $\D_2$ then it cannot map homomorphically to $\D_2^+$. Indeed, to see this observe that the digraph resulting from 
    of $\D_2^+$ (by forgetting about the $P,Q$ relations) maps homomorphically to $\D_2$:
    it suffices to send the edge $(p,q)$ to an arbitrary edge in $E^{\D_2}$, which is non-empty by assumption. 
    \par
    Now we describe a polynomial-time reduction from $\pcsp(\D_1^+, \D_2^+)$ to $\pcsp(\D_1, \D_2)$. The reduction considers an instance $\I$ of $\pcsp(\D_1^+, \D_2^+)$ and  checks in polynomial time whether every connected component of $\I$ that intersects $P^\I$ or $Q^\I$ maps homomorphically to the edge structure $\bm{W}$ with $W=\{p,q\}$, $E^{\bm{W}}=\{(p,q)\}$, $P^{\bm{W}}=\{p\}$, and $Q^{\bm{W}}= \{q\}$. If this is not the case, $\I$ is rejected. Otherwise, we remove from $\I$ the components that intersect $P^\I$ or $Q^\I$. Next, we check in polynomial time whether each remaining component of $\I$ can be mapped homomorphically to $\bm{W}$, and remove the ones that do. The resulting instance $\I^\prime$ is equivalent to the original $\I$ in the sense that $\I$ maps to $\D_i^+$ if and only if $\I^\prime$ does so as well. Furthermore, 
    observe that a homomorphism from $\I^\prime$ to $\D_i^+$ cannot include $p$ and $q$ in its image, as there are no components in $\I^\prime$ that map homomorphically to $\bm{W}$. This means that $\I^\prime$ maps to $\D_i^+$ if and only if it maps to $\D_i$. Hence, as the last step in our reduction we simply use $\I^\prime$ as an instance of $\pcsp(\D_1, \D_2)$.     
\end{proof}

A semigroup $S$ is a \emph{right-normal band} if $ss=s$ for all $s\in S$ 
and $rst=srt$ for all $r,s,t\in S$.
Recall that we write $s\sim r$ if $s\leqsg r$ and $r\leqsg s$ hold.
It follows from the definitions that the quotient $\widehat{S}=S/\sim$
inherits the semigroup structure from $S$. Moreover, $\widehat{S}$ is a
\emph{semilattice}, meaning that it is an Abelian semigroup where every element
is idempotent. Given an instance $\I$ of 
$\Lin(S,S)$ we denote by $\widehat{\I}$
the corresponding instance over $\widehat{S}$, where every constant $s$ is substituted by its $\sim$ class $\hat{s}$. 

We need two lemmas from~\cite{KTT07:tcs} and a simple observation.

\begin{lemma}[\cite{KTT07:tcs}]
\label{lem:semilattice}
Let $S$ be a semilattice. Then
$\Lin(S,S)$ can be solved in polynomial time. Moreover, if an instance $\I$ has a solution, it also has a unique minimal one (with respect to the $\leqsg$ preorder) that can be obtained in polynomial time. 
\end{lemma}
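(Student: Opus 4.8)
The plan is to exploit the commutative--idempotent structure of $S$ in two steps: first show that the solution set of any instance is closed under the componentwise product, and second observe that this product is a polymorphism, so that a natural arc-consistency routine computes the $\leqsg$-least solution. I would begin with closure under products. If $\sigma,\tau$ are two solutions of an instance $\I$ of $\Lin(S,S)$, I claim the componentwise product $\sigma\tau$ (given by $v\mapsto \sigma(v)\tau(v)$) is again a solution. For a constant equation $x=c$ we get $\sigma(x)\tau(x)=cc=c$ by idempotency, and for a product equation $xy=z$ commutativity gives $\sigma(x)\tau(x)\sigma(y)\tau(y)=\sigma(x)\sigma(y)\tau(x)\tau(y)=\sigma(z)\tau(z)$. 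Since the instance has finitely many variables each ranging over the finite set $S$, there are finitely many assignments; hence whenever $\I$ is solvable, the product of \emph{all} its solutions is again a solution, and it lies $\leqsg$-below every solution because $ab\leqsg a$. This is exactly the unique minimal (indeed minimum) solution. The very same computation shows that the binary operation of $S$ is a polymorphism of $\Lin(S,S)$.

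To find this minimal solution in polynomial time I would run generalised arc consistency. Maintain a domain $D(v)\subseteq S$ for each variable, initialised to $\{c\}$ for a variable constrained by $x=c$ (intersecting if several constants are imposed) and to $S$ otherwise, and repeatedly prune: for each equation $xy=z$, remove from $D(x)$ every $a$ admitting no $b\in D(y)$, $c\in D(z)$ with $ab=c$, and symmetrically for $D(y)$ and $D(z)$. This reaches a fixpoint in polynomial time since the domains only shrink and $|S|$ is a constant. If some domain becomes empty I would report that no solution exists; otherwise I would output the assignment $m(v):=\prod_{s\in D(v)} s$.

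The crux is to verify that this arc-consistent product assignment $m$ is a genuine solution and is the minimal one. That $m$ is a solution reduces to two short computations in the order $\leqsg$. Using $m(x)\leqsg a$ for every $a\in D(x)$ and monotonicity of the product, arc consistency at $z$ (each $c\in D(z)$ factors as $ab$ with $a\in D(x)$, $b\in D(y)$) gives $m(x)m(y)\leqsg m(z)$; conversely, arc consistency at $x$ and $y$ together with $ab\leqsg a$ gives $m(z)\leqsg m(x)$ and $m(z)\leqsg m(y)$, whence $m(z)\leqsg m(x)m(y)$. Combining yields $m(x)m(y)=m(z)$, and the constant equations hold since there $D(v)=\{c\}$. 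Finally, a solution value is never pruned: if $\sigma$ is any solution then $\sigma$ itself witnesses support for $\sigma(v)$, so $\sigma(v)\in D(v)$ is preserved throughout. Thus each $D(v)$ contains every value attained at $v$ by a solution, so $m(v)=\prod_{s\in D(v)}s$ lies $\leqsg$-below the minimum solution; being itself a solution, $m$ must equal it. The main obstacle is precisely this last point—arguing that taking the product over each arc-consistent domain produces an exact rather than merely relaxed solution—whereas closure under products and termination of the propagation are routine.
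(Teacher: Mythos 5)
Your proof is correct, and there is nothing in the paper to compare it against: the paper imports this lemma from~\cite{KTT07:tcs} without reproving it, so your write-up in effect supplies the omitted argument. Both halves check out. Closure of the solution set under the componentwise semilattice operation (idempotency for $x=c$, commutativity for $xy=z$) together with $ab\leqsg a$ gives existence of a minimum --- hence unique minimal --- solution. For the algorithmic half, the two inequalities you extract from arc consistency are exactly right: support at $z$ gives $m(x)m(y)\leqsg m(z)$ (each $c\in D(z)$ factors as $a_cb_c$, and $m(x)a_c=m(x)$, $m(y)b_c=m(y)$ by idempotency), while support at $x$ and $y$ gives $m(z)\leqsg c_a\leqsg a$ for every $a\in D(x)$, hence $m(z)\leqsg m(x)$ and likewise $m(z)\leqsg m(y)$; antisymmetry of $\leqsg$ on a semilattice (where $s\leqsg t$ iff $st=s$) then forces $m(x)m(y)=m(z)$. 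Since solution values are never pruned, $m$ sits $\leqsg$-below every solution and is itself a solution, so it is the minimum, and the whole computation is polynomial because domains only shrink and $|S|$ is a constant. You correctly identified the only non-routine step (that the product over arc-consistent domains is an exact solution, not a relaxation); your handling of it is the standard argument for CSPs admitting a semilattice polymorphism and is essentially the proof one finds in~\cite{KTT07:tcs}.
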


\begin{lemma}[\cite{KTT07:tcs}]
\label{lem:semilatice_to_band}
    Let $S$ be a right-normal band. Then an instance $\I$ of $\Lin(S,S)$ is
    solvable if it has a solution $f$  satisfying
    $f(x)\in \hat{s}_x$, for all $x\in I$, where the map $x\mapsto \hat{s}_x$ is the minimal solution of $\widehat{\I}$ in $\Lin(\widehat{S}, \widehat{S})$.
\end{lemma}

\begin{observation}
\label{obs:right_normal_band_identity}
    Let $S$ be a right-normal band, and let $s, s^\prime, t\in S$ be three arbitrary elements with $s\sim s^\prime$. Then $st=s^\prime t$.
\end{observation}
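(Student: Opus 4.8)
The plan is to reduce the order-theoretic hypothesis $s\sim s'$ to two multiplicative identities and then finish with a one-line computation. The crux is the following characterisation of $\leqsg$ in a right-normal band, which I would isolate as a claim: for all $a,b\in S$, if $a\leqsg b$ then $ba=a$. Granting the claim, the hypothesis $s\sim s'$ unpacks into $s\leqsg s'$ and $s'\leqsg s$, which give $s's=s$ and $ss'=s'$ respectively. The statement then follows from
\[
st=(s's)t=s'st=ss't=s't,
\]
where the third equality is the right-normal law $rst=srt$ applied with $(r,s,t)\mapsto(s',s,t)$, and the last uses $ss'=s'$. So the whole observation rests on the claim.

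To prove the claim I would first record the auxiliary identity $xyx=yx$, immediate from the defining laws since $xyx=yxx=yx$ (apply $rst=srt$, then idempotency). Then, writing $a$ as a product $w_1\dots w_m$ in which some factor equals $b$, I would induct on the length $m$. The base case $m=1$ forces $a=b$, so $ba=bb=b=a$. For $m\geq 2$ put $u=w_1\dots w_{m-1}$, so that $a=uw_m$. If $b$ occurs among $w_1,\dots,w_{m-1}$, then $u$ is a shorter product mentioning $b$, so the induction hypothesis gives $bu=u$ and hence $ba=b(uw_m)=(bu)w_m=a$. Otherwise $b$ occurs only as the final factor, i.e.\ $w_m=b$ and $a=ub$; here the auxiliary identity yields $ba=bub=ub=a$. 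This exhausts the cases.

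The main obstacle is precisely this claim: the definition of $\leqsg$ is purely combinatorial (``$a$ is \emph{some} product of elements that mentions $b$''), and the real work is collapsing an arbitrary such expression down to the single identity $ba=a$. The right-normal law is what lets one absorb a prepended copy of $b$ into an internal occurrence, while the genuinely separate case---where $b$ appears only in the terminal position---is exactly what the identity $xyx=yx$ is designed to handle. Once the claim is established, the remainder is the mechanical computation displayed above, so I expect no further difficulty; one should only take care to apply $rst=srt$ in the correct slot (the law permutes the two leading factors while fixing the last), which is what the chain $s'st=ss't$ uses.
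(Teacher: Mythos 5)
Your proof is correct, and its overall strategy---unpacking $s\sim s'$ into concrete multiplicative identities and finishing with a short computation using the right-normal law and idempotency---is the same as the paper's. The difference lies in how the unpacking is done and justified. The paper asserts, citing right-normality but giving no details, that $s\sim s'$ yields factorizations $s=s'r'$ and $s'=sr$ with the other element in front, and then substitutes back and forth until both $st$ and $s't$ equal the common expression $srr't$. You instead isolate the absorption law ($a\leqsg b$ implies $ba=a$) as a standalone claim and prove it by induction on the length of a product witnessing $a\leqsg b$, using $xyx=yx$ to handle the case where $b$ occurs only in the terminal position. The two intermediate statements are essentially equivalent in a right-normal band (if $a=br$ then $ba=b(br)=(bb)r=a$, and conversely $ba=a$ is itself such a factorization), so neither is more general; what your route buys is rigour---the induction fills in exactly the step that the paper's ``it must hold'' glosses over, and the terminal-occurrence case you treat is precisely the subtlety hidden there---together with a cleaner closing chain $st=(s's)t=s'st=ss't=s't$. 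What the paper's version buys is brevity.
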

\begin{proof}
    As $s\sim s^\prime$, it must hold that $s=s^\prime r^\prime$ and $s^\prime= s r$ for some $r,r^\prime \in S$. 
    Thus,
    $st= s^\prime r^\prime t = 
    s r r^\prime t$, and 
    $s^\prime t = s r t = s^\prime r^\prime r t = s r r^\prime r t = 
    s r r^\prime t$,
    where the last equality holds since $S$ is a right-normal band.
\end{proof}

Let $\D$ be a digraph.
We define a semigroup $S_D$ related to $\D$ in a similar fashion as~\cite{KTT07:tcs}. The main difference is that we need to ``plant'' a special subsemigroup $S_W$ inside $S_D$ that is used later as the set of constants in our promise equations. 
The semigroup $S=S_D$ is a right-normal band. It has the following
$\sim$-classes: $V^{\Ls}, V^{\Rs}, V^{\LCs}, V^{\LRs}, V^{\CRs}, E^{\Cs}, 0$,
described as follows.
Given $\square\in \{
\Ls, \Rs, \LCs, \LRs, \CRs \}$ the class $V^\square$ is a copy of $D\cup \{p,q\}$. That is,
$V^\square = \{ v^\square \mid v\in D \} \cup \{ p^\square, q^\square \}$. The class $E^{\Cs}$ is a copy of $E^D\cup \{(p,q)\}$, meaning that
$E^{\Cs}= \{ 
(u,v)^{\Cs} \mid (u,v)\in E^D \} \cup \{ (p,q)^{\Cs} \}$.
The letters $\Ls,\Rs,$ and  $\Cs$ stand for \emph{left, right, and center}, respectively. 
Finally, the class $0$ contains a single element $0$. By~\Cref{obs:right_normal_band_identity}, in a right-normal band $T$ it must hold that $st= s^\prime t$ for all $s,s^\prime, t\in T$ with $s\sim s^\prime$. Hence, given a $\sim$-class $C\subseteq T$ and an element $t$ we abuse the notation and write $C t$ to denote the product of an arbitrary element from $C$ with $t$. The product operation in $S$ is given by the following rules:

\begin{align*}
V^{\Rs} v^{\Ls} = V^{\Ls} v^{\Rs} = &
V^{\LRs} v^{\Rs} = V^{\LRs} v^{\Ls}
= V^{\Ls} v^{\LRs} = 
V^{\Rs} v^{\LRs} = v^{\LRs} & \\
& V^{\Ls} v^{\LCs} = V^{\LCs} v^{\Ls} =
 E^{\Cs} v^{\Ls} = E^{\Cs} v^{\LCs} = 
v^{\LCs} & \\
& V^{\Rs} v^{\CRs} = V^{\CRs} v^{\Rs} =
 E^{\Cs} v^{\Rs} = E^{\Cs} v^{\CRs} = 
v^{\CRs}, & 
\end{align*}
where $v$ is an arbitrary element in $D\cup \{p,q\}$. Additionally,
\[
V^{\Ls} (u,v)^{\Cs}= 
V^{\LCs} (u,v)^{\Cs} = u^{\LCs}, \quad \text{and} \quad 
V^{\Rs} (u,v)^{\Cs} = 
V^{\CRs} (u,v)^{\Cs} = v^{\CRs},
\]
where $(u,v)$ belongs to $E^D \cup \{ (p,q) \}$. Finally, all other products not described above have $0$ as their result. \par
Let us give some intuition about the semigroup $S_D$. Our goal is to encode the incidence structure of the digraph $D$ in a semigroup. The construction $S_D$ achieves this as follows. The rule $V^{\Ls} (u,v)^{\Cs} = u^{\LCs}$ states that multiplying any
$\Ls$-copy of a vertex $w^{\Ls}$ with an edge 
$(u,v)^{\Cs}$ results in the $\LCs$-copy of the edge's source
$u^{\LCs}$. Similarly, we can extract information about an edge's target by
multiplying with an $\Rs$-copy of a vertex using the rule
$V^{\Rs} (u,v)^{\Cs} =  v^{\CRs}$. Finally, the identities
$V^{\Rs} v^{\Ls} = V^{\Ls} v^{\Rs} = v^{\LRs}$ show that an $\Ls$-copy of a
vertex $v^{\Ls}$ and an $\Rs$-copy of a vertex $u^{\Rs}$ commute if and only if they are copies of the same vertex (i.e., $v=u$). 
\par
We define the subsemigroup $S_W \substr S_D$ as the one containing the elements $0, (p,q)^{\Cs}, p^\square, q^\square$  for $\square\in \{
\Ls, \Rs, \LCs, \LRs, \CRs \}$. Observe that for any digraph $D$, the quotient
$\widehat{S_D}=S_D/ \sim$ is isomorphic to $\widehat{S_W}=S_W/\sim$. This
semilattice is depicted in~\Cref{fig:lattice}.

\begin{figure}[h]
\centering
\includegraphics[width=0.3\textwidth]{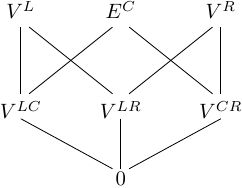}
\caption{The semilattice $\widehat{S_D}$, where lines indicate the order.}
\label{fig:lattice}
\end{figure}

\begin{lemma}
\label{lem:digraphs_to_eqns}
    There is a polynomial-time algorithm $\Phi$ that takes as an input a $\sigma^+$-structure $\I$
    and outputs a system of equations $\Phi(\I)$ with constants in $S_W$
    satisfying that, for any digraph $\D$, $\I$ maps into $\D^+$ if and only if $\Phi(\I)$ has a solution over $S_D$.
\end{lemma}
\begin{proof}
This follows the first reduction in~\cite[Theorem 7]{KTT07:tcs} while making sure that all constants remain in $S_W$.
We construct the system $\Phi(\I)$. 
For every vertex $v\in I$ we include variables $v^{\Ls},v^{\Rs}$. For each $\square\in \{
\Ls, \Rs\}$ we include the constraint $v^\square\in V^\square$, which is a shorthand for the equations $p^\square v^\square = v^\square$ and $v^\square p^\square = p^\square$.
We also include the equation $p^{\LRs} v^{\Ls} = p^{\LRs} v^{\Rs}$. If $v\in P^\I$ we include the constraints $v^\square = p^\square$ for $\square\in \{
\Ls, \Rs\}$. Similarly, if $v\in Q^\I$, then we include the constraints $v^\square = q^\square$. For each edge $(u,v)\in E^\I$ we include a variable $(u,v)^{\Cs}$ in $\Phi(\I)$, together with the constraint $(u,v)^{\Cs}\in E^{\Cs}$, which is a shorthand for
the equations
$(u,v)^{\Cs} (p,q)^{\Cs} = (p,q)^{\Cs}$ and
$(p,q)^{\Cs} (u,v)^{\Cs} = (u,v)^{\Cs}$.
Finally, we also add the equations
$p^{\LCs} (u,v)^{\Cs} = p^{\LCs} u^{\Ls}$
and 
$p^{\CRs} (u,v)^{\Cs} = p^{\CRs} v^{\Rs}$. 
The resulting system $\Phi(\I)$ satisfies the statement of the theorem. 
\end{proof}

\begin{lemma}
\label{lem:eqns_to_digraphs}
    There is a polynomial-time algorithm $\Psi$ that takes as an input a system of equations $\X$ with constants in $S_W$ and produces one of the following outcomes:
    \begin{enumerate}
    \item[(I)] It outputs a $\sigma^+$-structure $\Psi(\X)$ that maps into $\D^+$ for a digraph
    $\D$ if and only if $\X$ has a solution over $S_D$, or
    \item[(II)] it rejects $\X$ and $\X$ has no solution over $S_D$ for any digraph $\D$.  
    \end{enumerate}
\end{lemma}
\begin{proof}
    We describe the algorithm $\Psi$.
    This algorithm  is meant to transform the system $\X$ into a system of the form $\Phi(\I)$, for the algorithm $\Phi$ given in~\Cref{lem:digraphs_to_eqns} and some $\sigma^+$-structure $\I$. This time we follow the second reduction in~\cite[Theorem 7]{KTT07:tcs} while making sure that all constants in $\X$ remain in $S_W$ throughout all the transformations.    
    \par
    Without loss of generality, we may assume that every equation in $\X$ is initially of the form $x_1x_2=x_3$, for some variables $x_1, x_2, x_3$, or of the form $x=s$, for some variable $x$ and some element $s\in S_W$. 
    Consider the system $\widehat{\X}$ with constants in $\widehat{S_W}=S_W/\sim$.
    By~\Cref{lem:semilattice} we can find a minimal solution of $\widehat{\X}$ in polynomial time. 
    If such a solution does not exist, then the system $\X$ is not satisfiable over $S_D$ for any digraph $\D$, and the algorithm $\Psi$ just rejects it.  
    Otherwise, suppose that the system $\widehat{\X}$ has some minimal solution. This solution maps each variable $x\in X$ to a $\sim$-class $C_x$ of $S_W$. Consider an arbitrary digraph $\D$. Using the observation that $\widehat{S_W}\simeq \widehat{S_D}$ 
    and~\Cref{lem:semilatice_to_band}, we deduce that $\X$ has a solution over $S_D$ if and only if it has a solution where the value of each variable $x\in X$ belongs to the class $C_x$. Given a class $C_x$, we define the constant $c_x\in S_W$
    as 
    \begin{itemize}
        \item $p^\square$ if $C_x$ is the class $V^\square$ for $\square\in \{ 
         \Ls,\Rs, \LCs, \LRs, \CRs \}$,
        \item $(p,q)^{\Cs}$ if $C_x= E^{\Cs}$, or
        \item $0$ if $C_x= 0$.
    \end{itemize}
    For each variable $x\in X$ we add the equations $c_x x = x$ and $x c_x = c_x$. These equations are equivalent to the constraint that $x\in C_x$ (and we use $x\in C_x$ as a shorthand for those equations), so the resulting system is satisfiable over a semigroup $S_D$ if and only if the original one was. Additionally, once every variable $x$ is constrained to take values inside $C_x$, we can replace every equation of the form $x_1 x_2 = x_3$ in $\X$ with the equation
    $c_{x_3} x_2 = c_{x_3} x_3$ to yield an equivalent system. 
    Indeed, it must hold that $c_{x_i} x_i = x_i$, so the equation $x_1x_2=x_3$ is equivalent to $c_{x_1} x_1 c_{x_2} x_2 = c_{x_3} x_3$. Not only that, but $S_D$ is a normal band and $x_1c_{x_1}=c_{x_1}$, so last equation is equivalent to $c_{x_1}c_{x_2}x_2=c_{x_3}x_3$. Finally, the classes $C_{x_1}, C_{x_2}, C_{x_3}$ were part of a solution to $\widehat{\X}$, so
     it must be that $c_{x_1} c_{x_2} \sim c_{x_3}$, and by~\Cref{obs:right_normal_band_identity} it holds that $c_{x_1} c_{x_2} c_{x_1} = c_{x_3} c_{x_1}$. \par
     
     Every resulting equation of the form $0x_1=0x_2$ is trivially satisfied and can be discarded. Consider a variable $x\in X$ whose corresponding class $C_x$ is $0$. 
     As we have removed every equation of the form $0x_1=0x_2$, $x$ can only appear in constraints of the form $x \in 0$, and $x = 0$. These are trivially satisfiable by any assignment that maps $x$ to $0$, so we can remove the variable $x$ and all equations containing it. \par
     We are left with a system $\X$
    where each variable is bound to a class $V^\square$ for $\square\in \{ 
    \Ls, \Rs, \LCs, \LRs, \CRs \}$
    or $E^{\Cs}$. Consider a variable $x\in X$ bound to the class $V^{\LCs}$. Suppose this variable appears in some equation of the form $c_1 x = c_1 y$, and consider the class $C$ of $c_1$. By construction, it must be that $C\geqsg V^{\LCs}$ in $\widehat{S_W}$. However, we have removed all equations containing $0$, so the only possibility left is that $C = V^{\LCs}$. Suppose that we replace the requirement $x\in V^{\LCs}$ with $x\in V^{\Ls}$ and every equation of the form $x=v^{\LCs}$, where $v^{\LCs}\in S_W$
    is a constant, with $x= v^{\Ls}$. We claim the system $\X$ remains equivalent after these changes. Indeed, this results from the observation that $V^{\LCs} v^{\Ls}= V^{\LCs} v^{\LCs}$ in any semigroup $S_D$ for any vertex $v\in D^+$.
    By the same logic we can also replace any requirement of the kind $x\in V^{\LRs}$ or $x \in V^{\CRs}$ with $x\in V^{\Rs}$.\par
    Consider any equation of the form $x=(u,v)^{\Cs}$ for a constant $(u,v)^{\Cs}$. This equation is equivalent to the constraints $p^{\LCs} x = p^{\LCs} y$,
    $p^{\CRs} x = p^{\CRs} z$,
    $y=u^{\Ls}$ and $z= v^{\Rs}$, where 
    $y$ and $z$ are fresh variables, further restricted to 
    $y\in V^{\Ls}, z\in V^{\Rs}$.
    Hence, we can substitute in $X$ the original equation with these constraints to obtain an equivalent system. 
    \par
    
    Consider an equation of the form
    $c x = c y$, where both $x,y$ are constrained to be in $c$'s $\sim$-class. This equation holds if and only if $x=y$. Hence, we may remove this equation and identify both variables $x,y$ together.    
    \par
    This far we have obtained a system $\X$ where each variable is bound to either $V^{\Ls}, V^{\Rs}$ or $E^{\Cs}$, and the only constants are among $p^{\Ls}, p^{\Rs}, q^{\Ls}$, $q^{\Rs}$. After identifying variables and adding dummy variables if necessary we can assume the following:
    \begin{itemize}
        \item For each variable $x\in X$ constrained by $x\in E^{\Cs}$ there is exactly one variable $x_{\Ls}$ constrained by $x_{\Ls}\in V^{\Ls}$
        in an equation of the form $p^{\LCs} x = p^{\LCs} x_{\Ls}$, and exactly one variable $x_{\Rs}$ constrained by $x_{\Rs}\in V^{\Rs}$ that appears in an equation of the form $p^{\CRs} x = p^{\CRs} x_{\Rs}$.
        \item For any variable $x_{\Ls}$ constrained by
        $x_{\Ls}\in V^{\Ls}$, there is exactly one variable
        $x_{\Rs}$ constrained by $x_{\Rs}\in V^{\Rs}$ that appears in an equation of the form $p^{\LRs} x_{\Ls} = 
        p^{\LRs} x_{\Rs}$. The same remains true after swapping $\Rs$ and $\Ls$.   
        \item No two different variables $x,y\in X$ constrained by $x,y\in E^{\Cs}$ satisfy $x_{\Ls}= y_{\Ls}$ and $x_{\Rs} = y_{\Rs}$.
        \item Not considering equations that are part of the constraints $x\in C$
        for some $\sim$-class $C$, each equation is of the form (i) $p^{\LRs} x = p^{\LRs} y$ with $x\in V^{\Ls}$ and $y\in V^{\Rs}$,
        (ii) $p^{\LCs} x = p^{\LCs} x_{\Ls}$
        or  $p^{\CRs} x = p^{\CRs} x_{\Rs}$
        for some $x\in E^{\Cs}$, or
        (iii) $x=p^{\square}$ or $x=q^\square$ for
        $\square\in \{\Ls,\Rs\}$.
    \end{itemize}
    
One can see that such a system corresponds to $\Phi(\I)$ for some $\sigma^+$-structure $\I$ that can be built in polynomial time. Then $\Psi$ returns $\I$, which satisfies our requirements by~\Cref{lem:digraphs_to_eqns}.
\end{proof}

\begin{corollary}\label{cor:digraphs}
Let $(\D_1, \D_2)$ be a template of digraphs. Then $\pcsp(\D_1, \D_2)$ is polynomial-time Turing-equivalent to $\PLin(S_{D_1}, S_{D_2}, \varphi)$, where $\varphi=\id_{S_W}$.    
\end{corollary}
\begin{proof}
    We show that  $\PLin(S_{D_1}, S_{D_2}, \varphi)$ is polynomial-time equivalent to $\pcsp(\D_1^+, \D_2^+)$, which suffices by~\Cref{lem:extended_digraphs}. Observe that algorithm $\Phi$ given in~\Cref{lem:digraphs_to_eqns} is a polynomial-time Turing reduction from $\pcsp(\D_1^+, \D_2^+)$ to  $\PLin(S_{D_1}, S_{D_2}, \varphi)$, and algorithm $\Psi$, given in~\Cref{lem:eqns_to_digraphs} is a polynomial-time Turing reduction in the other direction. 
\end{proof}

\Cref{cor:digraphs} and~\Cref{th:pcsps_to_digraphs} establish~\Cref{th:semigroups}.

\section{Explicit Templates: Proof of~\Cref{th:examples}}
    We describe a bijective minion homomorphism $\xi: \pol(\A,\B) \rightarrow \Mscr_{M,a}$. Let us introduce some notation before the start. We identify the powerset $2^{[n]}$ with the set of tuples $\{0,1\}^n$ by associating each set $S\subseteq [n]$ to the $n$-tuple whose $i$-th entry is one if and only if $i\in S$ (i.e., the  characteristic vector of $S$). Thus, we see a $n$-ary polymorphism $p\in \pol(\A, \B)$ as a map from $2^{[n]}$ to
    $\Mscr_{M,a}(2)$. Following this convention, three sets $X_1, X_2,
    X_3\subseteq [n]$ belong to the relation $R^{\A^n}$ if and only if they are
    a partition of $[n]$. Similarly, the unary relation $C_0^{\A^n}$ contains
    only the empty set, and $C_1^{\A^n}$ contains only the whole set $[n]$. \par    
    Let us carry on with the description of $\xi$.
    Given a $n$-ary polymorphism $p\in \pol(\A,\B)$, $\xi$ maps it to the tuple $\bm{b}_p=(b_{p,1},\dots, b_{p,n}) \in \Mscr_{M,a}$ defined as follows. For each $i\in [n]$, let $(c_{i,1},c_{i,2})=p(\{i\})$. Then we set $b_{p,i}=c_{i,1}$.
    In order to prove that $\xi$ is a bijective minion homomorphism we need to show that (I) $\bm{b}_p$ is an element of $\Mscr_{M,a}$ 
    for all polymorphisms $p$, (II) that $\xi$ preserves minor operations, and (III) that $\xi$ is a bijection. \par
    Before moving on with the rest of the proof, we recall the definition of $R^\B$:
    \begin{align}
        \nonumber
        & ((r_1,s_1), (r_2, s_2), (r_3,s_3))\in R^\B \text{ if and only if } \\
        &
        \label{eq: aux_free_structure}
        r_1,r_2,r_3 \text{ commute pairwise, and }
        s_1= r_2r_3, \quad s_2 = r_3r_1, \quad s_3= r_1r_2.
    \end{align}
    Below we prove some claims about the map $\xi$ and polymorphisms of $(\A,\B)$
    that are used to show (I), (II), and (III).
    Fix some $n$-ary polymorphism $p$. 
    \begin{description}[style=unboxed]
     \item[Claim 1: $b_{p,i}$ and $b_{p,j}$ commute in $M$ for any $1\leq i<j\leq n$.]  
     Fix different indices $i,j\in [n]$. Clearly, the sets $\{i\}, \{j\}, [n]\setminus \{i,j\}$
     form a partition of $[n]$, so 
     $(p(\{i\}), p(\{j\}), p( [n]\setminus \{i,j\})$ must belong to $R^\B$. 
     Thus, there are elements $r_1,r_2,r_3\in M$
     witnessing~\Cref{eq: aux_free_structure}.
     In particular, $b_{p,i}=r_1$ and $b_{p,j}=r_2$, so $b_{p,i}$ and $b_{p,j}$ commute, establishing the claim. 
     \item[Claim 2:
     $p({[n]})=(a,e)$, and $p(\emptyset)=(e,a)$. In general, let $(c_1, c_2)= p(X)$ for some $X \subseteq {[n]}  $. Then $p( {[n]}\setminus X)=(c_2,c_1)$.] 
    The facts that $p([n])=(a,e)$, and $p(\emptyset)=(e,a)$ follow from the requirements that $p$ must preserve $C_1$ and $C_0$ respectively. Let us show the second part of the claim. 
    Observe that $(X, [n]\setminus X , \emptyset) \in R^{\A^n}$ so the image of this triple through $p$ belongs to $R^\B$. Hence, there are elements $r_1,r_2,r_3\in M$     witnessing~\Cref{eq: aux_free_structure}.  
    It also holds that $p(\emptyset)=(e,a)$, so 
    $r_3=e$, and $p(X)=(r_1,r_2)$, $p([n]\setminus X) = (r_2, r_1)$, as desired. 
    \item[Claim 3: Let $(b_1,b_2)=p(X)$
    $(c_1,c_2)=p(Y)$  $(d_1,d_2)=p(X\cup Y)$
    for two disjoint sets $X,Y\subseteq {[n]}$.
    Then $d_1=b_1c_1$.] Indeed, consider the set $Z=[n]\setminus (X\cup Y)$.
    By Claim~2 it holds that $p(Z)= (d_2, d_1)$. Moreover, it holds that $(X,Y,Z)\in R^{\A^n}$, so using the characterisation given in~\Cref{eq: aux_free_structure} we obtain that $d_1=b_1c_1$, as desired.
    \item[Claim 4: $\prod_{i=1^n} b_{p,i} =a$] Observe that $[n]$ is the disjoint union of
    the singleton sets $\{i\}$ for each $i\in [n]$. Using Claim~3 iteratively, we obtain that $\prod_{i=1^n} b_{p,i}$ equals the first element of the pair $p([n])=(a,e)$, proving the claim.     
    \end{description}
    Now we move on to proving (I), (II), and (III). Claims 1 and 4 show that for
    any $n$-ary polymorphism $p\in \pol(\A, \B)$, the tuple $(b_{p,i})_{i=0}^n$
    is an element of $\Mscr_{M,a}$, as stated in (I). As for fact (II), consider
    some $n$-ary polymorphism $p\in \pol(\A, \B)$, a map $\pi: [n]\rightarrow
    [m]$, and the minor $q=p^{(\pi)}$. We will now show that the fact that $\xi$
    preserves minors is equivalent to $b_{q,i}= \prod_{j\in \pi^{-1}(i)}
    b_{p,j}$ for all $i\in [m]$. Indeed, by definition, $b_{q,i}$ is the first element of the pair $q(\{i\}) = p(\pi^{-1}(i))$. However, 
    expressing $\pi^{-1}(i)$ as a disjoint union of singletons and using Claim 3 we obtain that the first element of $p(\pi^{-1}(i))$ is the product of the first elements of the pairs
    $p(\{j\})$ for each $j\in \pi^{-1}(i)$, as we wanted to show. \par
    So far we have established that $\xi$ is indeed a minion homomorphism, following (I) and (II). Lastly, we prove that $\xi$ is a bijection, as stated in (III). To see that $\xi$ is surjective, consider a tuple $(b_1,\dots, b_n)\in \Mscr_{M,a}(n)$. Then the map $p: 2^{n} \mapsto \Mscr_{M,a}(n)$ given by 
    \[X \mapsto ( \prod_{i\in X} b_i,  \prod_{i\in [n]\setminus X} b_i)
    \]
    is a polymorphism of $(\A, \B)$. Not only that but $\xi(p)= (b_1,\dots, b_n)$. In the other direction, to prove that $\xi$ is injective, we show that an $n$-ary polymorphism 
    $p\in \pol(\A, \B)$ is completely determined by $(b_{p,1}, \dots, b_{p,n})$. Consider an arbitrary set $X\subseteq [n]$, and let $p(X)=(c,d)$. Expressing $X$ as a disjoint union of singletons and using Claim 3 we obtain that $c= \prod_{i\in X} b_{p,i}$. Additionally, by Claim 2 we know that $p([n]\setminus X)=(d,c)$ and using the same argument as before we obtain that $d= \prod_{i\in [n]\setminus X} b_{p,i}$. Thus, $p(X)$
    is completely determined by the tuple $(b_{p,1}, \dots, b_{p,n})$, showing that $\xi$ is injective.     

\section*{Acknowledgements}

We thank the reviewers of the extended abstract~\cite{Larrauri24:icalp} and in
particular of this full version for comments, suggestions, and spotting several typos and mistakes.

\appendix

\section{Reduction to Special Equations}
\label{app:red3}

By definition, instances of $\Lin(S,T)$ can be seen as systems of equations over $S$ where all constants belong to $T$. Any system of equations can be transformed into an equivalent system by adding new variables and breaking larger equations into
smaller ones until every equation is of the form $x_1 x_2 = x_3$, for three variables $x_1,x_2,x_3$, or of the form $x = t$ for a variable $x$ and some constant $t\in T$. 
For instance, the equation
\[
x_1 c_1 x_2 c_2\dots x_k c_k = c_{k+1},
\]
where all $c_1, \dots, c_{k+1}\in T$
can be transformed into the system
\begin{align*}
& x_1 y =  x, \quad \quad y = c_1, \\
& x c_2 \dots x_k c_k =  c_{k+1},
\end{align*}
where $x,y$ are fresh variables. Applying these steps in succession yields a system where every equation is of the desired form. \par
When considering equations over groups the same idea works, but one needs to take into account inverted variables $x^{-1}$. Given a system of equations over a group $G$ with constants in a subgroup $H\substr G$, we can substitute any instance of the inverted variable $x^{-1}$ 
with instances of a fresh variable $y$ after adding the equations $xy=z$, $z=e$
to the system, where $z$ is another fresh variable. \par

\section{Dichotomies For Equations over Monoids and Groups}
\label{app:csps}

In this section we classify the complexity of $\Lin(G)$ for a group $G$
and $\Lin(M)$ for a monoid $M$ as corollaries of the Dichotomy Theorem for CSPs. These results where obtained previously in~\cite{Goldmann02:ic} and~\cite{KTT07:tcs}. We begin by stating the Dichotomy Theorem. 
A $n$-ary map $p:A^n\to A$ is called \emph{cyclic} if 
$p(a_1, a_2,\dots, a_n)= p(a_n, a_1, \dots, a_{n-1})$ for all $a_1,\dots, a_n\in
A$. \par

\begin{theorem}[\cite{Bulatov05:classifying,Bulatov17:focs,Zhuk20:jacm}]
\label{thm:dichotomy}
Let $\A$ be a finite relational structure. Then $\csp(\A)$ is tractable if
  $\pol(\A)$ contains a cyclic polymorphism of arity at least $2$. Otherwise
  $\csp(\A)$ is \NP-hard.     
\end{theorem}

\begin{theorem}\label{thm:dichotomy-eqns}
    Let $G$ be a group. Then $\Lin(G)$ has a cyclic polymorphism $p$ of arity at least $2$ if and only if $G$ is Abelian. 
\end{theorem}
\begin{proof}
    Suppose that $G$ is Abelian. Let $n\geq 2$ be such that $g^n=g$ 
    for all $g\in G$. Then the $n$-ary map $p:G^n \rightarrow G$ 
    given by $(g_1,\dots, g_n) \mapsto \prod_{i=1}^n g_i$ is a cyclic polymorphism of $\Lin(G)$. \par
    In the other direction, suppose that $p:G^n \rightarrow G$ 
    is a cyclic polymorphism of $\Lin(G)$ and $n\geq 2$. 
    Let $g_1,g_2\in G$. As $p$ is idempotent, we have that
    $g_1 = p(g_1,g_1,\dots, g_1)$.
    Using the fact that $p$ is a group homomorphism and is cyclic, we obtain that
    \[
    p(g_1,g_1,\dots, g_1)= 
    p(g_1,e,\dots, e) p(e, g_1, \dots, e) \cdots
    p(e,e, \dots, g_1) = p(g_1^n, e, \dots, e).
    \]
    Similarly, we can obtain that $g_2=p(e,g_2^n,\dots, e)$.
    This way, 
    \[
    g_1 g_2  = p(g_1^n, g_2^n, \dots, e)=
    g_2 g_1.
    \]
    As our initial choice of $g_1,g_2$ was arbitrary, this proves that $G$ is Abelian. 
\end{proof}

\begin{theorem}
    Let $M$ be a monoid. Then $\Lin(M)$ has a cyclic polymorphism $p$ of arity at least $2$ if and only if $M$ is Abelian and regular. 
\end{theorem}
\begin{proof}
  In one direction, if $M$ is regular then,
  by the second item in~\Cref{lem:regularity}
  there is some $n\geq 2$ satisfying $g^n=g$ for all $g\in M$. This way, we can define a cyclic polymorphism of $\Lin(M)$ 
  exactly as in the proof of~\Cref{thm:dichotomy-eqns}. \par
  In the other direction, let $p:M^n \rightarrow M$ be a $n$-ary cyclic
  polymorphism of $\Lin(M)$, where $n\geq 2$. By the same arguments as in the
  the proof of~\Cref{thm:dichotomy-eqns}, $M$ must be Abelian. Let us show now that $M$ is regular. Let $\psi: M \rightarrow M$ be the homomorphism
  given by $g \mapsto p(g,e,\dots, e)$. Using the fact that $p$ is cyclic and idempotent it must hold that
  $g = \psi(g)^n$. This shows that $\psi$ is a bijection and that $\psi(g)
  \geqsg g$ for all $g\in M$. As $M$ is finite, the only way this is possible is
  that $\psi(g) \sim g$ for all $g$. However, from $\psi(g)^n=g$ and $n\geq 2$
  we deduce that $\psi(g)\sim \psi(g)^2$, so $\psi(g)$ is a regular element by
  the fourth item in~\Cref{lem:regularity}. This holds for an arbitrary $g$, so every element in $M$ is regular. 
\end{proof}

{\small
\bibliographystyle{plainurl}
\bibliography{lz}
}

\end{document}